\newtheorem{thm}{Theorem}[section]
\newtheorem{lem}[thm]{Lemma}
\newtheorem{prop}[thm]{Proposition}
\newcommand{\RR}{ \mathbb{R} }
\newcommand{\spaceo}{\hspace{2 mm}}
\newcommand{\setsep}{ \spaceo | \spaceo}
\newcommand{\argmax}{\operatornamewithlimits{argmax}}
\newcommand{\argmin}{\operatornamewithlimits{argmin}}
\newcommand{\Abs}[1]{\left| #1 \right|}
\newcommand{\Set}[1]{\left\{ #1 \right\}}
\newcommand{\Brack}[1]{\left( #1 \right)}
\newcommand{\BBrack}[1]{\left\{ #1 \right\}}
\newcommand{\Exp}[1]{ \mathbb{E} #1}
\newcommand{\Expsubidx}[2]{ \mathbb{E}_{#1} #2}
\newcommand{\norm}[1]{\left\|#1\right\|}
\newcommand{\Ind}[1]{ \mathbbm{1}_{\Set{#1}} }
\newcommand{\eps}{\varepsilon}
\newcommand{\whM}{\widehat{M}}
\newif\ifimagesshow
\title{Topic Modeling via Full Dependence Mixtures}
\author{
Dan Fisher$^1$ \\
{\tt\small danf@technion.ac.il} \\
\and
Mark Kozdoba$^1$ \\
{\tt\small markk@technion.ac.il}
\and
Shie Mannor$^1$ \\
{\tt\small shie@ee.technion.ac.il}
}
\date{
     $^1${\tt\small Technion, Israel Institute of Technology}
}
\begin{document}
\maketitle

\begin{abstract}
In this paper we introduce a new approach to topic modelling that scales to 
large datasets by  using  a compact representation of the data and by
leveraging the GPU architecture.   
In this approach, topics are learned directly from the  
co-occurrence data of the corpus. In particular, we introduce a novel
mixture model which we term the Full Dependence Mixture (FDM) model.
FDMs model  second moment under  general generative 
assumptions on the data. While there is previous work on topic 
modeling using second moments,  we develop a direct stochastic 
optimization procedure for fitting an FDM with a single Kullback 
Leibler objective. Moment methods in general have the benefit that 
an iteration no longer needs to scale with the size of the corpus. 
Our approach allows us to leverage standard 
optimizers and GPUs for the problem of topic modeling. In 
particular, we evaluate the approach on two large datasets, 
NeurIPS papers and a Twitter corpus, with a large number of 
topics, and show that the approach performs comparably or better than the the standard benchmarks.
\end{abstract}

\section{Introduction}
\label{sec:intro}
A topic model is a probabilistic model of joint distribution in the data, that is typically used as a dimensionality reduction
technique in a variety of applications, such as for instance text mining, information retrieval and recommender systems.  
In this paper we concentrate on topic models in  text data.   Perhaps the most widely used topic model for text is
the Latent Dirichlet Allocation (LDA) model, 
\cite{Blei02latentdirichlet}. LDA is a fully generative probabilistic model, and is typically learned through a Bayesian approach --- by sampling the parameter  distribution given the data, via Collapsed Gibbs samplers \cite{cgibbs}, \cite{cgibbs_book}, variational methods, \cite{Blei02latentdirichlet},  
\cite{variational1}, \cite{variational2}, 
or other related approaches. A typical  
learning procedure for Bayesian methods involves an iteration over the entire corpus, where a topic assignment is sampled per each token or document in the corpus. In order to apply these methods to large corpora, a variety of optimized procedures were developed, where speed improvement is achieved either via parallelization or via more economic sampling scheme, per token. An additional level of complexity is added by the fact that LDA has two hyperparameters, the Dirichlet priors of the token distribution in topics, and topic distribution in documents. This may complicate the application of the methods since the choice of parameters may influence the results even for relatively large data sizes.  A detailed discussion of LDA optimization is given in Section \ref{sec:literature}. 

In this paper we propose an alternative approach to topic 
modeling, based on principles that are principally different from the standard LDA optimization techniques. We 
show that using this approach it is possible to 
analyze large datasets on a single 
workstation with a GPU, and to obtain 
results that are comparable or better 
than the standard benchmarks. 

In the rest of this section we introduce some necessary notation, describe our model and the related loss function, discuss the optimization  procedure for this loss, and  overview the experimental results of the paper.

\subsection{Method Description}
\label{sec:subsec_method_desc}
We assume that the text data was generated from a pLSA (Probabilistic Latent Semantic Allocation) probability model , \cite{PLSA}, as follows: Denote by $\mathcal{X}$ the set of distinct tokens in the corpus, 
and suppose that we are given $T$ topics, $\mu_{t}$, $t\leq T$, where each topic is a probability distribution on the set of tokens $\mathcal{X}$. 
Then the pLSA assumption is that each document $d$ is 
generated by independently sampling tokens from a mixture of topics, denoted $\nu_d$:
\begin{equation}
    \label{eq:intro_mu_d}
    \nu_d = \sum_t \theta_d(t) \mu_t,
\end{equation}
where $\theta_d(t)\geq 0$ and $\sum_{t} \theta_d(t) = 1$ for every document 
$d$.  Note that we do not specify the generative model for $\theta_d$. 
In this sense, pLSA is a semi-generative model, and is more general than LDA.

Next, for every document in the corpus we construct its token co-occurrence 
probability matrix, and we take the 
co-occurrence probability matrix of the corpus to be the average of the 
document matrices. Let $N = \Abs{\mathcal{X}}$ be the dictionary size - 
the number of distinct tokens in the corpus. Then the co-occurrence matrix 
$\widehat{M}$ of the corpus is an $N\times N$ matrix, with 
non-negative entries that sum to $1$. Suppose that one performs the 
following experiment: Sample a document from a 
corpus at random, and then sample two tokens independently from the 
document. Then $\widehat{M}_{u,v}$ is the probability 
to observe the pair $u,v$ in this experiment (up to a small modification, 
full details are given in Section \ref{sec:methods}).

Now, if one assumes the pLSA model of the corpus, then it can be shown that 
the expectation of $\widehat{M}$ should be of the form 
\begin{equation}
    \label{eq:fdm_def}
    M_{u,v}(\mu,\alpha) = \sum_{i,j=1}^T \alpha_{i,j} \mu_i(u) \mu_j(v), 
\end{equation}
where $\mu_i$ are the topics and $\alpha_{i,j}\geq 0$, $\alpha_{i,j}=\alpha_{j,i}$, and  $\sum_{i,j} \alpha_{i,j} = 1$ represent the corpus level topic-topic correlations. 
We refer to the matrices of the form (\ref{eq:fdm_def}) as Full Dependence 
Mixture (FDM) matrices. This is due to the analogy with standard 
multinomial mixture models (also known as categorical mixture models), which can be represented in the form 
(\ref{eq:fdm_def}) but with $\alpha$ restricted to be a diagonal matrix. Multinomial mixture models correspond to the special case 
where each document contains samples from only one topic, where the topic may be different for different documents. 
Equivalently, in multinomial mixtures, each $\theta_d$ is $0$ on all but one coordinates. 

In this paper, we consider a set of topics $\mu_t$ to be a good fit for the 
data if there are some correlation coefficients $\alpha$ such that 
$M(\mu,\alpha)$ is close to $\widehat{M}$, the FDM generated from the data. 
Specifically, we define the loss by 
\begin{equation}
 \label{eq:L_defin}
    L(\mu, \alpha) = -\sum_{u,v \in \mathcal{X}} \widehat{M}_{u,v} \log M(\mu,\alpha)_{u,v} , 
\end{equation}
and we are interested in minimizing $L$ over all $\mu,\alpha$. 
Clearly, minimizing  $L$ is equivalent to minimizing the Kullback-Leibler 
divergence between $M(\mu,\alpha)$ and $M$, viewed as probability 
distributions over $\mathcal{X} \times \mathcal{X}$. 

\begin{figure}
\centering
\includegraphics[width=\linewidth]{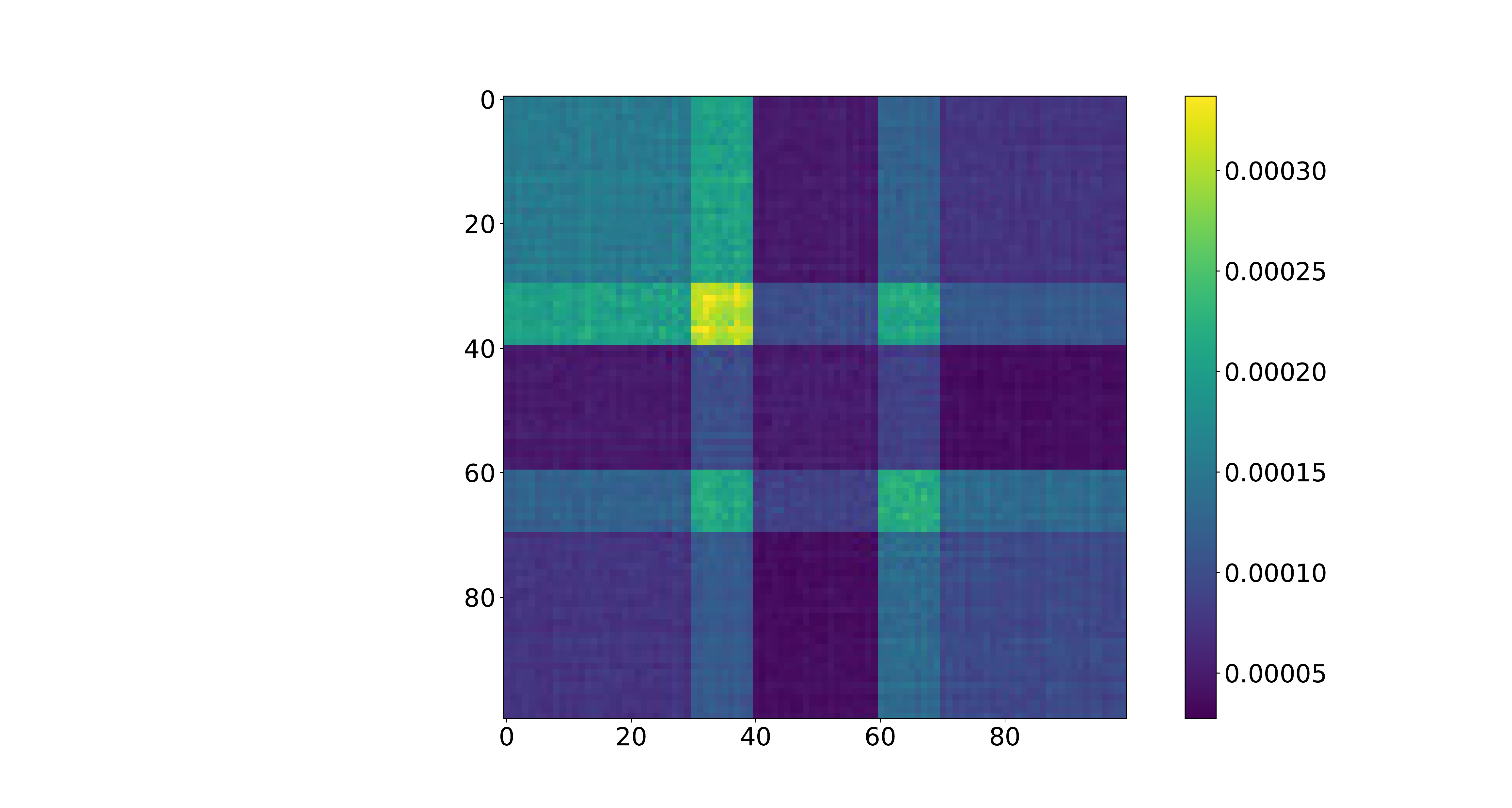}
\caption{An $\whM$ matrix example.}
\label{fig:sample_M}
\end{figure}

To gain basic intuition as to why $M(\mu,\alpha)$ that approximates well the empirical matrix $\whM$ should yield
good topics, it is useful to consider again the simple case of the multinomial mixture, and moreover, the specific case
where the topics are disjoint. In this case, the matrix $\whM$ will be block diagonal (up to a reordering of the dictionary), with disjoint blocks that correspond to the 
topics. Thus, provided enough samples $d$, the topics can be easily read off from the matrix. 
An example of a more complicated matrix $\whM$ is shown in Figure \ref{fig:sample_M}. Here the ground set is $\mathcal{X} = \Set{1,\ldots,100}$, and the topics are uniform on intervals $[1,40]$,$[30,70]$, and $[60,100]$ respectively. The documents were generated from the mixture (\ref{eq:intro_mu_d}), with 
$\theta_d$ sampled from a non-uniform Dirichlet, $\theta_d \sim Dir((2,1,1.5))$, which make the topics appear with different frequencies in the data. Although here the relation between the topics and the matrix is more involved, one can still see that the topics could be traceable from the matrix.

In Section \ref{sec:consistency} we show the asymptotic consistency of the loss (\ref{eq:L_defin}) for topics which satisfy the classical anchor words (\cite{anchor_words_nmf},\cite{Arora2}) assumption. That is, if the topics satisfy the anchor words assumption, then given enough samples the topics can be uniquely reconstructed from  the matrix $\whM$ by minimizing the loss (\ref{eq:L_defin}).  The anchor words assumption roughly states that each topic has a word that is unique to that particular topic. Note that this word does not have to appear in each document containing the topic, and may in fact have a relatively low probability inside the topic itself. It is known, \cite{Arora2}, and easy to verify, that natural topics, such as topics produced by learning LDA, do satisfy the anchor words assumption. We refer to Section \ref{sec:consistency} for further details.

The advantage of using the cost $L(\mu,\alpha)$ is that it depends 
on the corpus {\em only} through the matrix $\widehat{M}$. Therefore, the size of 
the corpus does not enter the optimization problem directly, and 
we are dealing with a fixed size, $N\times N$ problem. This is a 
general feature of reconstruction through moments approaches, such as for instance \cite{Arora2}, \cite{tensor3}. 
In particular, the number of variables for the optimization is $TN + T^2$, in contrast to variational or Gibbs sampler based 
methods, which either have $T$ variables for every document, or have a single variable for every token in the corpus, 
respectively. 

\subsection{Optimization of the Objective $L(\mu,\alpha)$.}
\label{sec:subsec_optimization}
For smaller problems, one may directly optimize the objective 
(\ref{eq:L_defin}) using gradient descent methods. However, note that 
if one computes $L(\mu,\alpha)$ directly, then one has to compute  
$M(\mu,\alpha)$, which is a sum of $T^2$ matrices of size $N^2$. Indeed, 
denote by $M^{i,j}$ the $N\times N$ matrix such that its $u,v$ entry is $M^{i,j}_{u,v} = \mu_i(u) \cdot \mu_j(v)$. Then 
$M(\mu,\alpha) = \sum_{i,j\leq T} \alpha_{i,j} M^{i,j}$.
On standard GPU computing architectures, all $T^2$ of the matrices 
will have to be in memory simultaneously, which is prohibitive even for 
moderate values of $N,T$. To resolve this issue, we reformulate 
the optimization of $L$ as a stochastic optimization problem in $u,v$. 
To this end, note that $L$ is an expectation of the term 
$\log M(\mu,\alpha)_{u,v}$ over pairs of tokens $(u,v)$, sampled from 
$\widehat{M}$, where $\whM$ is viewed as a probability distribution over $\mathcal{X} \times \mathcal{X}$. Formally, 
\begin{equation}
    L(\mu,\alpha) = \Expsubidx{(u,v) \sim \widehat{M}} \log M(\mu,\alpha)_{u,v}.
\end{equation}
Therefore, given $(u,v)$, one only has to compute the gradient of  
$M(\mu,\alpha)_{u,v}$, rather than full
$M(\mu,\alpha)$ at $\mu,\alpha$ -- which is a much smaller 
optimization problem, of size $O(T^2)$,
and this can be done for moderate $(u,v)$ batch sizes.  This approach makes 
the optimization of $L(\mu,\alpha)$ practically feasible.  The full algorithm, given as  Algorithm \ref{alg:fdm_opt} below, is discussed in detail in Section \ref{sec:methods}. Note that this approach differs from the standard stochastic gradient descent flow on the GPU, where
the batches consist of data samples (documents in the case of text data). Instead, here the data is already summarized as $\widehat{M}$, and the batches consist of pairs of tokens $(u,v)$ that we sample actively from  $\whM$.

\subsection{Experimental Results}
\label{sec:subsec_exprimental}
We evaluate the FDM algorithm on a semi-synthetic dataset where the ground truth topics are known and are taken to be realistic (topics learned by LDA on NeurIPS data) and on two real world datasets: the NeurIPS  full papers corpus, and a very large (20 million tweets) Twitter dataset that was collected via the Twitter API.  For the semi-synthetic dataset the topic quality was measured by comparison to the ground truth topics, while for the real datasets log-likelihhod on a hold-out set was measured. We compare FDM to a state of the art LDA collapsed Gibbs sampler (termed SparseLDA), and to the topic learning algorithm introduced in \cite{Arora2} (termed EP).  Additional details on these benchmarks are given in Section \ref{sec:literature}. For the semi-synthetic models, we find that while, as expected, SparseLDA with true hyperparameters performs somewhat better, FDM performs similarly to a SparseLDA with close but different hyperparameters. All algorithms find a reasonable approximation of the topics, althpough EP performance is notably weaker. On NeurIPS FDM performs similarly to SparseLDA, while on Twitter data FDM performance is somewhat better. Both algorithms outperform EP.

To summarize, the contributions of this paper are as follows: We introduce a new  approach to
topic modeling, via the fitting of the empirical FDM $\widehat{M}$ to the  
topic FDM $M(\mu,\alpha)$ by likelihood minimization, and prove the consistency of the associated loss. We introduce a practical optimisation procedure for this method, and experimentally show that it produces topics that are comparable or better than the state of the art approaches, while using  principles that significantly differ from the existing methods.

\section{Literature}
\label{sec:literature}


The subject of optimization in topic models has received significant attention in the literature. 
We first describe the general directions of this research. Variational methods for the LDA objective were 
developed in the paper that introduced the LDA model, \cite{Blei02latentdirichlet}. See also 
\cite{variational1}, \cite{variational2}. The collapsed Gibbs sampler for LDA was introduced in 
\cite{cgibbs}, \cite{cgibbs_book}. Optimizations of the collapsed sampler that exploit the sparsity 
of the topics in a document were developed in \cite{mallet2}, \cite{cgibbs_sparcity}, \cite{walker_alias} and yielded further 
  performance improvements. 
Streaming, or online methods for the LDA objective were proposed in \cite{mallet1}, \cite{streaming_lda}. 
We also note that the topic modelling problem, and in particular the pLSA model, is closely related to the Non-Negative Matrix Factorization (NMF) 
problem, \cite{nmf_survey}. In this context, EM type algorithms for topic models were developed in the 
paper that introduced pLSA, \cite{PLSA}. Streaming algorithms for NMF in general are also an active 
field, see for instance \cite{online1}, \cite{online2}, which involve Euclidean costs, but could possibly be adapted to the topic modelling setting.   Finally, distributed methods for 
LDA were introduced in \cite{mallet1}, \cite{smola_parallel}, \cite{parallel2}, 
\cite{parallel3}, \cite{parallel4}.

Topic reconstruction from corpus statistics such as the matrix 
$\widehat{M}$ were previously considered in the theoretical study of topic models. Topic reconstruction
from the third moments of the data was proposed in \cite{tensor3}. While highly theoretically significant, these algorithms 
require construction an analysis of an $N\times N \times N$ matrix, and thus can not be applied with dictionaries of size 
of several thousands tokens. An algorithm 
that is based on the matrix $\widehat{M}$, as in our approach, was given in \cite{Arora1} and 
 improved in \cite{Arora2}. 
However, despite the fact that both \cite{Arora2} and our approach use  $\widehat{M}$, the underling principles behind the two algorithms are completely different. The approach of \cite{Arora2} is based on the notion of anchor words (see Section \ref{sec:consistency}), and consists of two steps: First, the algorithm attempts to explicitly identify the anchor words from the matrix $\whM$. Then, the topics are reconstructed using the obtained anchor words. 
Due to the structure of the pLSA model, it can be shown that any raw of the matrix $\whM$ can be approximately represented as a convex combination of the raws of $\whM$ that correspond to the anchor words. Thus, the anchor word identification in \cite{Arora2} is done by identifying the approximate extreme points of a set, where the set is the set of raws of $\whM$. In contrast, our approach does not attempt to find or use anchor words explicitly and conceptually is a much simpler gradient descent algorithm. We optimize in the space of topics directly, by approximating $\whM$ by an FDM $M(\mu,\alpha)$. It is also worth mentioning that in the consistency proof of Algorithm \ref{alg:fdm_opt}, Theorem \ref{thm:consistency} (although not in the algorithm itself), we use the characterization of the topics as the extreme points of a certain polytope. However, these are not the same extreme points as in \cite{Arora2}. The extreme points we use for the proof correspond to topics, while the extreme points in \cite{Arora2} correspond to conditional probabilities of tokens given anchor words, and are generally very different from the topics themselves. Finally, as mentioned earlier, we use the algorithm form \cite{Arora2} as an additional benchmark. While the algorithm of \cite{Arora2} is extremely fast, and can be very precise under certain conditions, the quality of the topics found by that algorithm is significantly lower compared to the topics produced by the standard optimized LDA Gibbs sampler.

A variant of the Gibbs sampler for LDA that is adapted to the computation on GPU was recently proposed in \cite{tristan}. We note that similarly to the Gibbs samplers or variational methods, this approach  
maintains a form of a topic assignment for each document. Therefore, the number of variables that need to be stored grows
with the number of documents \emph{and} topics and is $\Omega(D\cdot T)$, where $D$ is the number of documents and $T$ number of topics. This is true despite the remarkable memory optimizations described in \cite{tristan}, which address other matrices used by that algorithm. Observe that for GPUs, this problem is quite severe, since 
the amount of memory typically available on a GPU is much smaller than the CPU memory. This is in contrast to our approach, where 
the GPU memory requirement is independent of the number of documents or tokens.
To put this in context, the datasets we analyze here, NeurIPS and Twitter (both with $T=500$) can not be analyzed via the approach of \cite{tristan} on a high end desktop GPU ($10GB$ memory).

The MALLET code, \cite{McCallumMALLET}, is widely used as the standard benchmark.  This code is based on a collapsed Gibbs sampler for LDA, and implements a variety of optimizations discussed earlier. In particular it exploits sparsity, based on \cite{mallet2},  parallelization (within a single workstation) based on 
\cite{mallet1}, and has an efficient and publicly available implementation.

Finally, we note that, while outside of the scope of this paper, the methods presented here 
could easily be adapted to distributed, multi-GPU settings, using standard SGD parallelization techniques. 
This may be achieved either by elementary means, by increasing the batch size and processing it on 
multiple GPUs in parallel, or via more involved, lock-free methods, such as \cite{hogwild}.

\section{Formal Algorithm Specification}
\label{sec:methods}
Recall that $\Abs{\mathcal{X}}=N$ is the size of the dictionary. For a document $d$ given as a sequence 
of tokens $d = \Set{x_1,\ldots,x_{l_d}}$, where $l_d$ is the total number of tokens in $d$, 
denote by $c_d \in \RR^N$ the count vector of $d$, 

\begin{equation}
\label{eq:c_d_def}
c_d(u) = \#\Set{ x_i \setsep x_i = u } \mbox{ for $u \in \mathcal{X}$}. 
\end{equation}
Thus, $c_d$ is the bag of words representation of $d$,  and $c_d(u)$ is 
the number of times $u$ appears in $d$.  With this notation, the 
construction of the matrix $\whM$ from the data is described in 
Algorithm \ref{alg:fdm_calc}.

To motivate this construction, assuming the pLSA model, suppose a document $d$ is sampled from a mixture of topics
\begin{equation}
\nu = \sum_{t\leq T} \theta(t) \mu_t.
\end{equation}
The co-occurrence matrix of the mixture $\nu$ is 
\begin{equation}
\label{eq:m_nu}
    \Brack{M_{\nu}}_{u,v} = \nu(u)\cdot \nu(v) = \sum_{i,j\leq T} \theta(i)\theta(j) \mu_i(u) \mu_j(v).
\end{equation}
Thus, $\Brack{M_{\nu}}_{u,v}$ is the 
probability of obtaining the pair 
$(u,v)$ when one samples two token 
independently from $\nu$. The 
co-occurrence matrix of the corpus is 
the average of the corresponding $M_{\nu}$ over all 
documents $d$. Observe form 
(\ref{eq:m_nu}) that $M_{\nu}$ that has the form of 
an FDM, (\ref{eq:fdm_def}), and 
hence the co-occurrence matrix of the full corpus also has this form.  Next, note that we do not have access to the matrices $M_{\nu}$ themselves, only to the documents $d$, which are samples from $\nu$. We therefore estimate $M_{\nu}$ using the tokens of the document. Specifically, the matrix $\whM_d$ constructed in (\ref{eq:whM_d_alg_definition})   in Algorithm \ref{alg:fdm_calc} is an unbiased estimate of $M_{\nu}$ from the tokens in $d$: We have
\begin{equation}
\label{eq:unbiased_estimate_main_text}
    \Expsubidx{d}{\whM_d \; | \theta} = M_{\nu},
\end{equation}
where the expectation is over the documents sampled from $\nu$. We provide the proof of this 
statement in the supplementary material. Therefore, to obtain an approximation to the  
co-occurrence matrix of the model, in  Algorithm \ref{alg:fdm_calc} we 
first compute the estimates $\whM_d$ for each document, and then average 
over the corpus.

\begin{algorithm}[tb]
   \caption{Computation of $\whM$}
   \label{alg:fdm_calc}
\begin{algorithmic}[1]
    \REQUIRE Corpus: $\mathcal{C} = \Set{d_1,...,d_D}$, a corpus with $D$ documents.  
    \STATE For every $d \in \mathcal{C}$ construct $\whM_d$ such that the entry 
           $u,v$ is:
\begin{eqnarray}
\Brack{\whM_d}_{u,v} = \begin{cases}
\frac{l_d}{l_d -1} \frac{c_d(u)}{l_d}\frac{c_d(v)}{l_d} &  \text{ if $u \neq v$} \\
\frac{l_d}{l_d -1} \frac{c_d(u)}{l_d}\frac{c_d(v)}{l_d} - \frac{c_d(u)}{l_d\Brack{l_d-1}} & \text{ if $u = v$}
\end{cases}.
\label{eq:whM_d_alg_definition}
\end{eqnarray}
    \STATE Set 
    \begin{equation}
    \label{eq:whm_formal_def}
    \whM = \frac{1}{D} \sum_{d \in \mathcal{C}} \whM_d. 
    \end{equation}
\end{algorithmic}
\end{algorithm}

\begin{algorithm}[tb]
   \caption{FDM Optimization}
   \label{alg:fdm_opt}
\begin{algorithmic}[1]    
    \REQUIRE $\widehat{M}$: The empirical FDM
    \REQUIRE $B$: Batch size
    \REQUIRE $T$: Number of topics
    \STATE Initialize free variables $\alpha'$, $\mu_t'$ at random. 
    \STATE Set $\alpha = softmax(\alpha'), \mu_t = softmax(\mu_t')$.
    \WHILE{$L_B$ not converged}
        \STATE Sample $B$ pairs of tokens, $Batch = \Set{(u_1,v_1),...,(u_B,v_B)}$. Each pair is sampled from $\widehat{M}$, $(u_i,v_i) \sim \widehat{M}$. 
        \STATE Set $L_B(\mu,\alpha) = \sum_{k=1}^{B}\log\sum_{i,j}^{T}\alpha_{i,j}\mu_{i}(u_k)\mu_{j}(v_k)$
        \STATE $(\mu',\alpha') \gets  (\mu',\alpha') + \gamma \nabla_{\mu',\alpha'} L_B$
    \ENDWHILE    
\end{algorithmic}
\end{algorithm}

Once the matrix $\whM$ is constructed, our goal is to reconstruct the topics $\mu$ and the corpus level coefficients $\alpha$ from $\whM$. As discussed in the introduction, the FDM optimization algorithm 
is a stochastic gradient ascent on the pairs of tokens $(u,v)$ sampled from $\whM$, given as
 Algorithm \ref{alg:fdm_opt}.  Note that since the parameters
 $\alpha,\mu$ in which we are interested are constrained to be probability distributions, we parametrize them 
 with free variables $\alpha',\mu_t'$ via softmax:
 \begin{equation}
     \alpha_{i,j} = \frac{e^{\alpha'_{ij}}}{\sum_{i',j'\leq T} e^{\alpha'_{i'j'}}} \mbox{ and }
     \mu_t (l) = \frac{e^{\mu_t'(l)}}{\sum_{l' \leq N} e^{\mu_t'(l')}},
 \end{equation}
 where $i,j \leq T$, $t\leq T$,  and $l\leq N$. Thus, in Algorithm \ref{alg:fdm_opt}, $\alpha,\mu$ are functions of $\alpha',\mu'$ and the 
 gradient ascent is over $\alpha',\mu'$. Note that any SGD optimizer, including adaptive rate optimizers, may be used in Step $6$ of Algorithm \ref{alg:fdm_opt}. We use Adam, \cite{adam}, in the experiments.
 

\section{Consistency}
\label{sec:consistency}
\newcommand{\nsimplx}{\Delta_{N-1}}
\newcommand{\tsimplx}{\Delta_{T-1}}
In this section we discuss the consistency of the loss function (\ref{eq:L_defin}) under the anchor words assumption. Specifically, we show that if the the corpus is generated by a pLSA model with a set of topics $\Set{\mu_t}_{1}^T$ that satisfy the anchor words assumption, then among topics satisfying this assumption, the loss 
(\ref{eq:L_defin}) can only be asymptotically  minimized by an FDM $M(\mu,\alpha)$ with the true topics $\mu$. It follows that if one minimizes the loss (\ref{eq:L_defin}) and the resulting topics  satisfy the anchor words assumptions (this holds empirically, see below), then in the limit the topics must be the true topics.

The anchor words assumption was introduced in  \cite{anchor_words_nmf} as a part of a set of sufficient conditions for identifiability in NMF. 
A set $\Set{\mu_t}_{1}^T$ of topics is said to have an \emph{anchor words} property, denoted ($\mathcal{AW}$),  if for every $t\leq T$, there is a point $u_t \leq N$ such that $\mu_t(u_t) > 0$ and $\mu_{t'}(u_t) = 0$ for all $t' \neq t$. The point $u_t$ is called the anchor word of the topic $\mu_t$.  As mentioned earlier, natural topics tends to have the anchor word property. For instance, topics found by various LDA based methods have anchor words, \cite{Arora2}. We note that topics found by the FDM optimization algorithm, Algorithm \ref{alg:fdm_opt}, also have anchor words.

We first develop a few equivalent geometric characterizations 
of anchor words. While some of the arguments used in the proofs are similar in spirit to those in \cite{anchor_words_nmf},\cite{Arora1}, the particular notions we introduce have not appeared in the literature previously, and 
significantly clarify the geometric nature of the anchor word assumption. We thus provide these results here for completeness. Some of the equivalences below play an important role in the proof of Theorem \ref{thm:consistency}.

Denote by $\nsimplx$ the probability simplex in $\RR^N$.  A set of probability measures $\Set{\mu_t}_{1}^T \subset \nsimplx$ is called \emph{span-maximal} ($\mathcal{SM}$) if $\nsimplx \cap span \Set{\mu_t}_{1}^T = conv \Set{\mu_t}_{1}^T$.  Here $span$ and $conv$ denote the span and the convex hull of the set.  A set $\Set{\mu_t}_{1}^T \subset \nsimplx$ is \emph{positive} ($\mathcal{P}$) if the following holds: For every linear combination $\sum_{t} a_t$, 
if $\sum_{t} a_t \mu_t \in \nsimplx$ then $a_t\geq 0$ for all $t\leq T$. In other words, a set is positive if only its linear combinations with non-negative coefficients belong to the simplex.  Finally, a set is \emph{maximal} ($\mathcal{M}$) if the following holds: For any set $\Set{\nu_t}_{1}^T \subset \nsimplx$, if $conv \Set{\mu_t}_{1}^T \subset conv \Set{\nu_t}_{1}^T$, then we must have $\mu_t = \nu_{\pi(t)}$ for some permutation $\pi$ on $\Set{1,\dots,T} $. Equivalently, 
a set is maximal if it can not be properly contained in a convex hull of another set of topics of size $T$.

\begin{prop}
\label{lem:maximal_polytopes}
For a set of linearly independent topics $\Set{\mu_t}_{1}^T \subset \nsimplx$ , the properties ($\mathcal{SM}$), ($\mathcal{P}$), ($\mathcal{AW}$), and ($\mathcal{M}$) are equivalent. 
\end{prop}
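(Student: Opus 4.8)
The plan is to prove the four properties equivalent by first establishing the cheap core $\mathcal{SM} \Leftrightarrow \mathcal{P} \Leftrightarrow \mathcal{AW}$ and then closing the loop through $\mathcal{SM} \Rightarrow \mathcal{M} \Rightarrow \mathcal{SM}$. Throughout I would use linear independence of $\Set{\mu_t}$ to identify a point $x \in \nsimplx \cap span\Set{\mu_t}$ with its unique coefficient vector $(a_t)$, $x = \sum_t a_t \mu_t$; since $x$ and each $\mu_t$ sum to $1$, automatically $\sum_t a_t = 1$, so that $aff\Set{\mu_t}$ is exactly the sum-to-one slice of $span\Set{\mu_t}$, and a point of $\nsimplx \cap span\Set{\mu_t}$ lies in $conv\Set{\mu_t}$ precisely when all of its coefficients $a_t$ are nonnegative.

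For $\mathcal{AW} \Rightarrow \mathcal{P}$: if $\sum_t a_t \mu_t \in \nsimplx$ and $u_s$ is the anchor word of $\mu_s$, then $0 \leq \Brack{\sum_t a_t \mu_t}(u_s) = a_s \mu_s(u_s)$ and $\mu_s(u_s) > 0$ forces $a_s \geq 0$. The equivalence $\mathcal{P} \Leftrightarrow \mathcal{SM}$ is then just the coordinate description above: the inclusion $conv\Set{\mu_t} \subseteq \nsimplx \cap span\Set{\mu_t}$ is automatic, and the reverse inclusion holds iff every $x = \sum_t a_t \mu_t \in \nsimplx$ has all $a_t \geq 0$, i.e. iff $\mathcal{P}$ holds. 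For $\mathcal{P} \Rightarrow \mathcal{AW}$ I would argue the contrapositive: if $\mu_k$ has no anchor word then $supp(\mu_k) \subseteq \bigcup_{t \neq k} supp(\mu_t) = supp(\bar\mu)$, where $\bar\mu := \frac{1}{T-1}\sum_{t \neq k}\mu_t$; hence for $\eps > 0$ small enough the signed combination $\nu_\eps := (1+\eps)\bar\mu - \eps\mu_k$ is still entrywise nonnegative — trivially on coordinates outside $supp(\mu_k)$, and on coordinates in $supp(\mu_k) \subseteq supp(\bar\mu)$ the strict positivity of $\bar\mu$ there gives the needed slack — so $\nu_\eps \in \nsimplx$ while its coefficient vector has $k$-th entry $-\eps < 0$, contradicting $\mathcal{P}$.

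For $\mathcal{SM} \Rightarrow \mathcal{M}$: suppose $conv\Set{\mu_t} \subseteq conv\Set{\nu_s}_{s=1}^T$ with $\nu_s \in \nsimplx$. As the $\mu_t$ are affinely independent, $conv\Set{\mu_t}$ is a $(T-1)$-dimensional simplex; a convex hull of $T$ points containing a $(T-1)$-dimensional set is itself a $(T-1)$-simplex with the same affine hull, so each $\nu_s \in aff\Set{\mu_t} \subseteq span\Set{\mu_t}$, whence $\nu_s \in \nsimplx \cap span\Set{\mu_t} = conv\Set{\mu_t}$ by $\mathcal{SM}$; the two hulls then coincide and, having matching vertex sets, $\mu_t = \nu_{\pi(t)}$ for a permutation $\pi$. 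The remaining implication $\mathcal{M} \Rightarrow \mathcal{SM}$ (equivalently, the direct $\mathcal{M} \Rightarrow \mathcal{AW}$) is the step I expect to be the main obstacle. Its contrapositive asks: given $w \in \Brack{\nsimplx \cap span\Set{\mu_t}}\setminus conv\Set{\mu_t}$, produce $\Set{\nu_s}_{s=1}^T \subseteq \nsimplx$ with $conv\Set{\mu_t} \subsetneq conv\Set{\nu_s}$. The naive attempt is to walk from an interior point of $conv\Set{\mu_t}$ through $w$ until just past a facet, reducing to a $w$ with a single negative coefficient $a_k<0$, and then replace the vertex $\mu_k$ by $w$; but such a $w$ sits on the far side of the facet opposite $\mu_k$, so this move drops $\mu_k$ from the enlarged hull rather than enclosing it. A correct construction seems to require moving several of the $\mu_t$ simultaneously while using the entrywise nonnegativity of $w$ to keep the new vertices inside $\nsimplx$; making this work — possibly after a small perturbation to avoid degenerate incidences — is the crux of the proof.
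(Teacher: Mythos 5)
Your handling of $(\mathcal{AW})\Leftrightarrow(\mathcal{P})\Leftrightarrow(\mathcal{SM})$ and of $(\mathcal{SM})\Rightarrow(\mathcal{M})$ is correct and essentially identical to the paper's: the anchor-word coordinate forces nonnegativity of the coefficients, the contrapositive $\neg(\mathcal{AW})\Rightarrow\neg(\mathcal{P})$ uses the same perturbation of the average $\bar\mu$ of the remaining topics, and $(\mathcal{SM})\Rightarrow(\mathcal{M})$ traps the $\nu_s$ inside $\Delta_{N-1}\cap span\Set{\mu_t}$ (the paper argues via equality of spans where you count dimensions; the content is the same). But the proposal does not prove the proposition: you establish that the first three properties are equivalent and imply $(\mathcal{M})$, and you explicitly leave $(\mathcal{M})\Rightarrow(\mathcal{SM})$ (equivalently $(\mathcal{M})\Rightarrow(\mathcal{AW})$) open. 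That is a genuine gap --- without some implication out of $(\mathcal{M})$ the four properties are not shown to be equivalent.

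You should know, however, that your diagnosis of the obstruction is exactly right, and it is in fact fatal. The paper's own proof of $(\mathcal{M})\Rightarrow(\mathcal{AW})$ replaces $\mu_t$ by $\widetilde{\mu_t}=\frac{1}{1-\eps}\Brack{\tilde\mu-\eps\mu_t}$ and asserts $\mu_t=\alpha\widetilde{\mu_t}+\beta\tilde\mu$ with $\alpha,\beta>0$; but the definition gives $\tilde\mu=(1-\eps)\widetilde{\mu_t}+\eps\mu_t$, so it is $\tilde\mu$ that lies between $\mu_t$ and $\widetilde{\mu_t}$, and the affine expansion of $\mu_t$ has coefficient $-\frac{1-\eps}{\eps}<0$ on $\widetilde{\mu_t}$: the modified hull loses $\mu_t$ instead of enclosing it --- precisely the failure mode you describe for the ``walk past the facet'' move. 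Moreover the implication cannot be repaired: take $N=4$, $T=3$, $\mu_1=(\tfrac{1}{2},0,\tfrac{1}{2},0)$, $\mu_2=(\tfrac{1}{2},0,0,\tfrac{1}{2})$, $\mu_3=(0,\tfrac{1}{2},\tfrac{1}{2},0)$. These are linearly independent and $\mu_1$ has no anchor word, so $(\mathcal{AW})$, $(\mathcal{P})$, $(\mathcal{SM})$ all fail (indeed $-\mu_1+\mu_2+\mu_3=(0,\tfrac{1}{2},0,\tfrac{1}{2})\in\Delta_3$). Yet $\Delta_3\cap aff\Set{\mu_t}$ is the square with vertices $\mu_1,\mu_2,\mu_3,-\mu_1+\mu_2+\mu_3$, and the two edges of $conv\Set{\mu_t}$ meeting at $\mu_1$ are full edges of that square; any triangle squeezed between $conv\Set{\mu_t}$ and the square must contain these segments in its boundary, hence have them as two of its own edges, hence coincide with $conv\Set{\mu_t}$. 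So this set is maximal, and $(\mathcal{M})$ does not imply the other three. Your refusal to claim the step was sound judgment, not merely an omission.
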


The proof is given in supplementary material Section \ref{sec:supp_polytopes_proof}. Note in addition that $(\mathcal{AW})$ implies that 
$\Set{\mu_t}_{1}^T \subset \nsimplx$ are linearly independent. 
We now discuss the relation between between $\Set{\mu_t}_{1}^T$ and an FDM matrix $M_{u,v} = \sum_{i,j=1}^T \alpha_{ij} \mu_i(u) \mu_j(v) $. In particular, we describe how $\Set{\mu_t}_{1}^T$ can be recovered from the image of $M$ (as an operator, $M: \RR^N \rightarrow \RR^N$) under $(\mathcal{AW})$.
Note first that the image of $M$ satisfies $Im(M) \subseteq span \Set{\mu_t}_{1}^T$. Indeed, for a fixed $v$, a column of $M$ can be written as a linear combination,  $M_{\cdot,v} = \sum_i \Brack{ \sum_{j} \alpha_{i,j} \mu_j(v)} \mu_i (\cdot) $. Moreover, if the matrix $\alpha = \alpha_{ij}$ has full rank, $rank(\alpha) = T$, and if $\Set{\mu_t}_{1}^T$ are linearly independent, 
then $Im(M) = span \Set{\mu_t}_{1}^T$. Therefore, when $rank(\alpha) = T$, given $M$ we can recover $span \Set{\mu_t}_{1}^T$ as $Im(M)$.  Now, if $\Set{\mu_t}_{1}^T$ satisfies $(\mathcal{AW})$, then by Proposition \ref{lem:maximal_polytopes} it also satisfies $(\mathcal{SM})$. It then follows that $conv \Set{\mu_t}_{1}^T$ can be recovered as $conv \Set{\mu_t}_{1}^T = Im(M) \cap \nsimplx$. 
Finally, if we know $conv \Set{\mu_t}_{1}^T$, we can recover $\Set{\mu_t}_{1}^T$, since these are simply the extreme points of that polytope, and every polytope is uniquely characterized by its extreme points. This relation between
$\Set{\mu_t}_{1}^T$ and $M$ is at the basis of the consistency result below.

We state the consistency for the probabilistic setting: We complement the pLSA model to be a full generative model by assuming that topic distribution $\theta_d$ is sampled independently for each document $d$ from some probability distribution $\mathcal{T}$ on $\tsimplx$.  If $\mathcal{T}$ is a Dirichlet distribution, symmetric or asymmetric, this corresponds to an LDA model. Other examples include models with correlated topics, \cite{blei_corr}, or hierarchical topics, \cite{pachinko}, among many others. 
The only requirement on the topic distribution is the following: Let $\Theta_{i,j} = \Exp \theta_d(i) \theta_d(j)$ be the expected topic-topic co-occurrence matrix corresponding to the sampling scheme. We require that $\Theta$ is full rank. This assumption holds in all the examples above. 

\begin{thm}[Consistency]
\label{thm:consistency}
Consider a generative pLSA model (\ref{eq:intro_mu_d}), over topics $\Set{\mu_i}_{1}^T$ which satisfy ($\mathcal{AW}$), and  
where $\theta_d$ are sampled independently from a fixed distribution on $\tsimplx$, with topic-topic expected co-ocurrence matrix $\Theta$. 
Set $M_{u,v} := M(\mu,\Theta) = \sum_{i,j=1}^T \Theta_{i,j} \mu_i(u) \mu_j(v)$. 
Then with probability $1$, 
\begin{equation}
\label{eq:thm_direct_statement}
    \lim_{D \rightarrow \infty} \widehat{M}_{u,v} \log M_{u,v}  = \sum_{u,v} M_{u,v} \log M_{u,v}.
\end{equation}
Conversely, let $\Set{\mu'_t}_{t=1}^T \neq \Set{\mu_t}_{t=1}^T$ be a different set of topics satisfying ($\mathcal{AW}$).  There is a 
$\gamma = \gamma \Brack{\Set{\mu_i}_{1}^T,\Set{\mu'_i}_{1}^T,   \kappa(\Theta)}> 0$, such that 
for \emph{any} FDM $M'$ over $\Set{\mu'_i}_{1}^T$, with probability $1$,
\begin{equation}
\label{eq:thm_converse_statement}
    \lim_{D \rightarrow \infty} \widehat{M}_{u,v} \log M'_{u,v}  \leq  \sum_{u,v} M_{u,v} \log M_{u,v} - \gamma.
\end{equation}
\end{thm}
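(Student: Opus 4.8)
I would split the statement into an almost–sure convergence part, which is just a law of large numbers, and a deterministic, geometric part, which is where the full rank of $\Theta$ and the anchor word hypothesis do the work. Throughout I use the conventions $0\log 0 = 0$ and $0\cdot(-\infty)=0$, and I record the elementary fact that each $\whM_d$ has non‑negative, uniformly bounded entries (for $u\neq v$ the entry is $\propto c_d(u)c_d(v)\geq 0$, and for $u=v$ it equals $c_d(u)(c_d(u)-1)/(l_d(l_d-1))\geq 0$), with $\Expsubidx{d}{\whM_d\mid\theta}=M_{\nu_d}$ by (\ref{eq:unbiased_estimate_main_text}). Averaging the latter over $\theta_d$ and using $\Theta_{ij}=\Exp \theta_d(i)\theta_d(j)$ gives $\Exp \whM_d = M$. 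Since the $\whM_d$ are i.i.d.\ over documents (the $\theta_d$ are i.i.d., and tokens are drawn conditionally on $\theta_d$) and bounded, the strong law of large numbers applied entrywise yields $\widehat M\to M$ coordinatewise with probability $1$. This already proves the direct statement (\ref{eq:thm_direct_statement}): for pairs with $M_{u,v}>0$ the map $x\mapsto x\log M_{u,v}$ is continuous, while for pairs with $M_{u,v}=0$ we have $\widehat M_{u,v}=0$ almost surely (a non‑negative random variable with zero mean), so such pairs contribute $0$ to both sides.

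\textbf{Reduction of the converse.} Fix $\Set{\mu'_t}_1^T\neq\Set{\mu_t}_1^T$ satisfying ($\mathcal{AW}$) and any FDM $M'=M(\mu',\alpha')$. By the same law of large numbers, $\sum_{u,v}\widehat M_{u,v}\log M'_{u,v}\to\sum_{u,v}M_{u,v}\log M'_{u,v}$ with probability $1$ (if $M'_{u,v}=0$ for some pair with $M_{u,v}>0$, then, since $\widehat M_{u,v}\to M_{u,v}>0$, the left side tends to $-\infty$ and the inequality is trivial; pairs where $M'_{u,v}=0$ only when $M_{u,v}=0$ again contribute $0$ to both sides). Hence (\ref{eq:thm_converse_statement}) is equivalent to the purely deterministic statement that
\[
\gamma \;:=\; \inf_{\alpha'}\, D_{\mathrm{KL}}\big(M \,\|\, M(\mu',\alpha')\big) \;>\; 0,
\]
where the infimum runs over the compact set of admissible coefficient matrices $\alpha'$ (symmetric, non‑negative, entries summing to $1$), and $D_{\mathrm{KL}}(P\|Q)=\sum_{u,v}P_{u,v}\log(P_{u,v}/Q_{u,v})$ is the relative entropy of the two probability distributions on $\mathcal{X}\times\mathcal{X}$ (both $M$ and $M'$ sum to $1$, since $\sum_{ij}\Theta_{ij}=\Exp(\sum_i\theta_d(i))^2=1$ and $\sum_{ij}\alpha'_{ij}=1$).

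\textbf{The wrong topics cannot reproduce $M$, and the gap is positive.} This is the crux. First, $M=M(\mu,\Theta)$ is \emph{not} an FDM over $\Set{\mu'_t}_1^T$ for any $\alpha'$. Suppose it were. Each column of $M(\mu',\alpha')$ lies in $span\Set{\mu'_t}_1^T$, so $Im(M)\subseteq span\Set{\mu'_t}_1^T$; on the other hand, the full rank of $\Theta$ together with the linear independence of $\Set{\mu_t}_1^T$ (which follows from ($\mathcal{AW}$)) gives $Im(M)=span\Set{\mu_t}_1^T$, as explained just before the theorem. Both spans being $T$‑dimensional, they coincide. Applying Proposition~\ref{lem:maximal_polytopes} (($\mathcal{AW}$) $\Rightarrow$ ($\mathcal{SM}$)) to each set then yields $conv\Set{\mu_t}_1^T=\nsimplx\cap span\Set{\mu_t}_1^T=\nsimplx\cap span\Set{\mu'_t}_1^T=conv\Set{\mu'_t}_1^T$, and since both families are exactly the extreme point sets of this common polytope we get $\Set{\mu_t}_1^T=\Set{\mu'_t}_1^T$, contradicting $\Set{\mu'_t}_1^T\neq\Set{\mu_t}_1^T$. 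Consequently $D_{\mathrm{KL}}(M\|M(\mu',\alpha'))>0$ for every admissible $\alpha'$. Since $\alpha'\mapsto M(\mu',\alpha')_{u,v}$ is linear and $q\mapsto -\log q$ is convex and lower semicontinuous on $[0,\infty]$, the function $\alpha'\mapsto D_{\mathrm{KL}}(M\|M(\mu',\alpha'))$ is lower semicontinuous, hence attains its infimum $\gamma$ on the compact $\alpha'$‑domain, and therefore $\gamma>0$. Tracking this last step quantitatively --- e.g.\ lower bounding the KL by $\tfrac12\|M-M(\mu',\alpha')\|_1^2$ via Pinsker and then bounding $\|M-M(\mu',\alpha')\|$ below in terms of the separation of $conv\Set{\mu_t}_1^T$ from $conv\Set{\mu'_t}_1^T$ and of how well $\Theta$ conditions $M$ as an operator onto $span\Set{\mu_t}_1^T$ --- produces the stated dependence $\gamma=\gamma\big(\Set{\mu_i}_1^T,\Set{\mu'_i}_1^T,\kappa(\Theta)\big)$.

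\textbf{Main obstacle.} The law‑of‑large‑numbers steps and the compactness/lower‑semicontinuity bookkeeping are routine; the entire difficulty sits in the identity $Im(M)=span\Set{\mu_t}_1^T$. This is the only point where the full‑rank hypothesis on $\Theta$ is used, and it is essential: without it $M$ could equal an FDM over a strictly smaller --- hence different --- family of topics, and consistency would fail. Once the span is recovered, Proposition~\ref{lem:maximal_polytopes} turns the anchor word hypothesis into the rigidity statement $\nsimplx\cap span\Set{\mu_t}_1^T=conv\Set{\mu_t}_1^T$, which pins the topics down as extreme points and closes the argument; the only remaining care is the (benign) treatment of coordinates where $M_{u,v}$ or $M'_{u,v}$ vanishes, handled by the zero‑mean/non‑negativity observation and by noting that a vanishing $M'_{u,v}$ makes the inequality easier.
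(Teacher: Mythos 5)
Your proposal is correct, and its first half (the strong law of large numbers applied entrywise to the bounded, mean-$M$ matrices $\whM_d$) and its identification step (full rank of $\Theta$ gives $Im(M)=span\Set{\mu_t}_1^T$; then $(\mathcal{AW})\Rightarrow(\mathcal{SM})$ via Proposition \ref{lem:maximal_polytopes} and uniqueness of extreme points force the two topic sets to coincide) match the paper's proof. Where you genuinely diverge is in how the uniform gap $\gamma>0$ over all admissible $\alpha'$ is obtained. The paper first proves a gap in operator norm, $\inf\Set{\norm{M-M'}_{op} : Im(M')\subseteq V'}>0$, invoking the Davis--Kahan perturbation theorem (and the nondegeneracy $\kappa(M)>0$) to rule out a sequence $M'_n\to M$ with image in $V'$, then converts this to a KL gap via equivalence of finite-dimensional norms and Pinsker's inequality. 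You instead observe that the admissible $\alpha'$ form a compact simplex, that $\alpha'\mapsto M(\mu',\alpha')$ is linear, and that $D_{KL}(M\|\cdot)$ is lower semicontinuous, so the infimum is attained at some $\alpha'$ where the KL is strictly positive because $M\neq M(\mu',\alpha')$. Your route is more elementary and arguably cleaner -- it avoids spectral perturbation theory entirely and never needs to reason about the larger set of all operators with image in $V'$ -- at the cost of being purely qualitative about $\gamma$; the paper's Pinsker/Davis--Kahan chain is what produces the advertised dependence of $\gamma$ on $\kappa(\Theta)$, which you correctly flag as recoverable by the same quantitative chase. You are also more careful than the paper about the coordinates where $M_{u,v}=0$ or $M'_{u,v}=0$ (using non-negativity of the entries of $\whM_d$ to get $\whM_{u,v}=0$ a.s.\ when $M_{u,v}=0$); the paper leaves this implicit.
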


This result is proved in supplementary material Section \ref{sec:supp_thm_proof}. The probability in Theorem \ref{thm:consistency} is over the samples from the generative model, through the random variables $\whM$ (note that $\whM$, defined (\ref{eq:whm_formal_def}), depends on $D$). 
The theorem states that the loss (\ref{eq:L_defin}) is minimized at the true parameters $\Set{\mu_i}_{1}^T$ and $\Theta$. Indeed, denote $L_0 = \sum_{u,v} M_{u,v} \log M_{u,v}$. Note that the cost  $L(\mu,\Theta)$ is a random variable, as it depends on $\whM$. 
Then Eq.~(\ref{eq:thm_direct_statement}) states that $L(\mu,\Theta)$ converges to $L_0$, while 
Eq.~(\ref{eq:thm_converse_statement}) is equivalent to stating that for any other set of topics, 
 $\Set{\mu'_i}_{1}^T$, we have $L(\mu',\alpha)> L_0 + \gamma$ for any $\alpha$. The gap $\gamma$ will depend on how well $\Set{\mu'_i}_{1}^T$ approximates the true topics  $\Set{\mu_i}_{1}^T$.

\section{Experiments}
\label{sec:experiments}
\label{sec:nips_experiment}
\begin{figure*}
\centering
\subcaptionbox{
\label{fig:nips_log_like}
Distribution of log-likelihoods on test set. Larger (closer to zero) values are better. FDM (blue), SparseLDA (orange), EP (green).}{\includegraphics[width=0.48\textwidth]{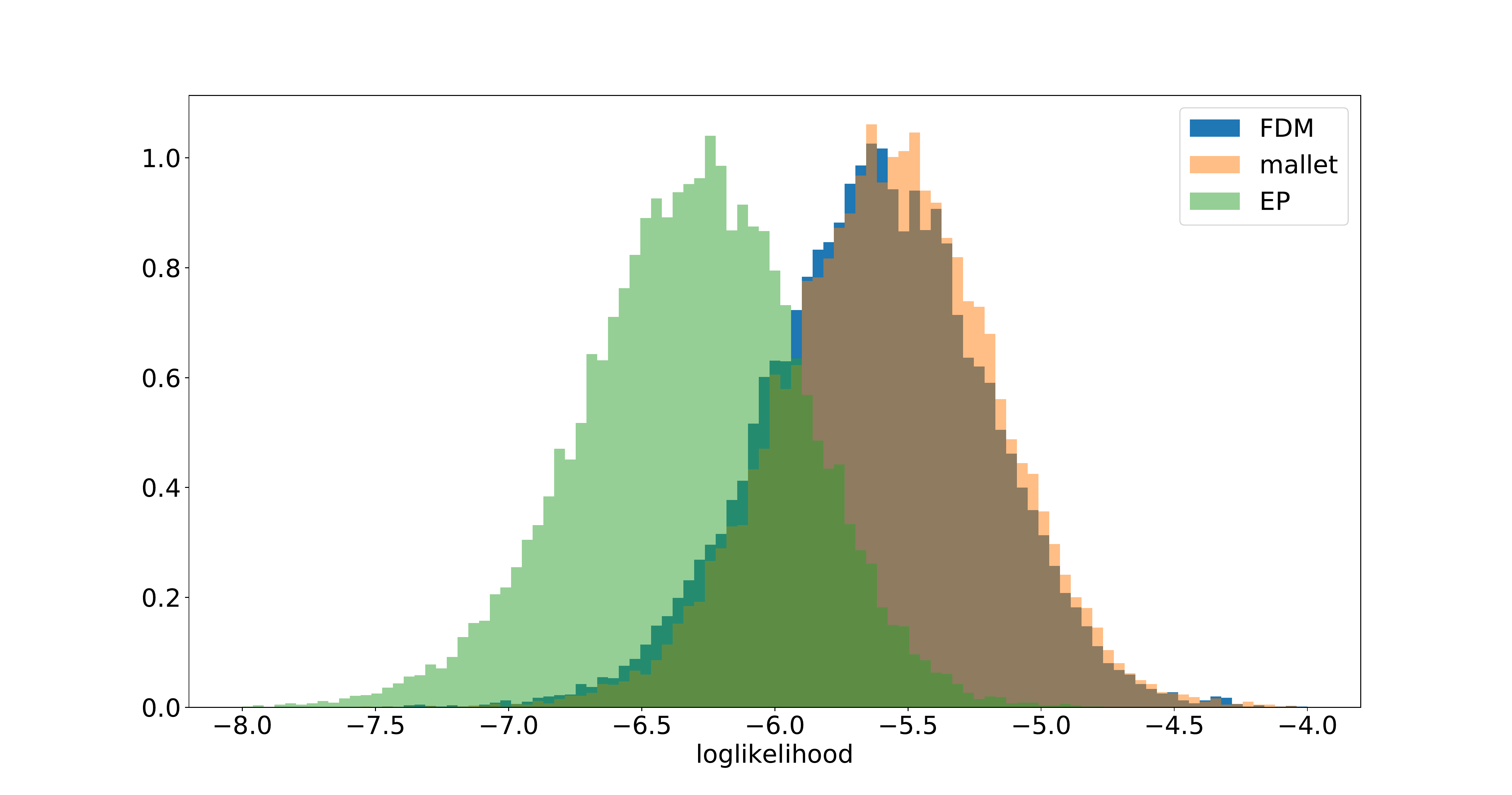}}%
\hfill 
\subcaptionbox{
\label{fig:nips_topic_matching}
Distribution of distances in an optimal Matching. FDM to SparseLDA (blue), FDM to empirical (orange), SparseLDA to empirical(green).}{\includegraphics[width=0.48\textwidth]{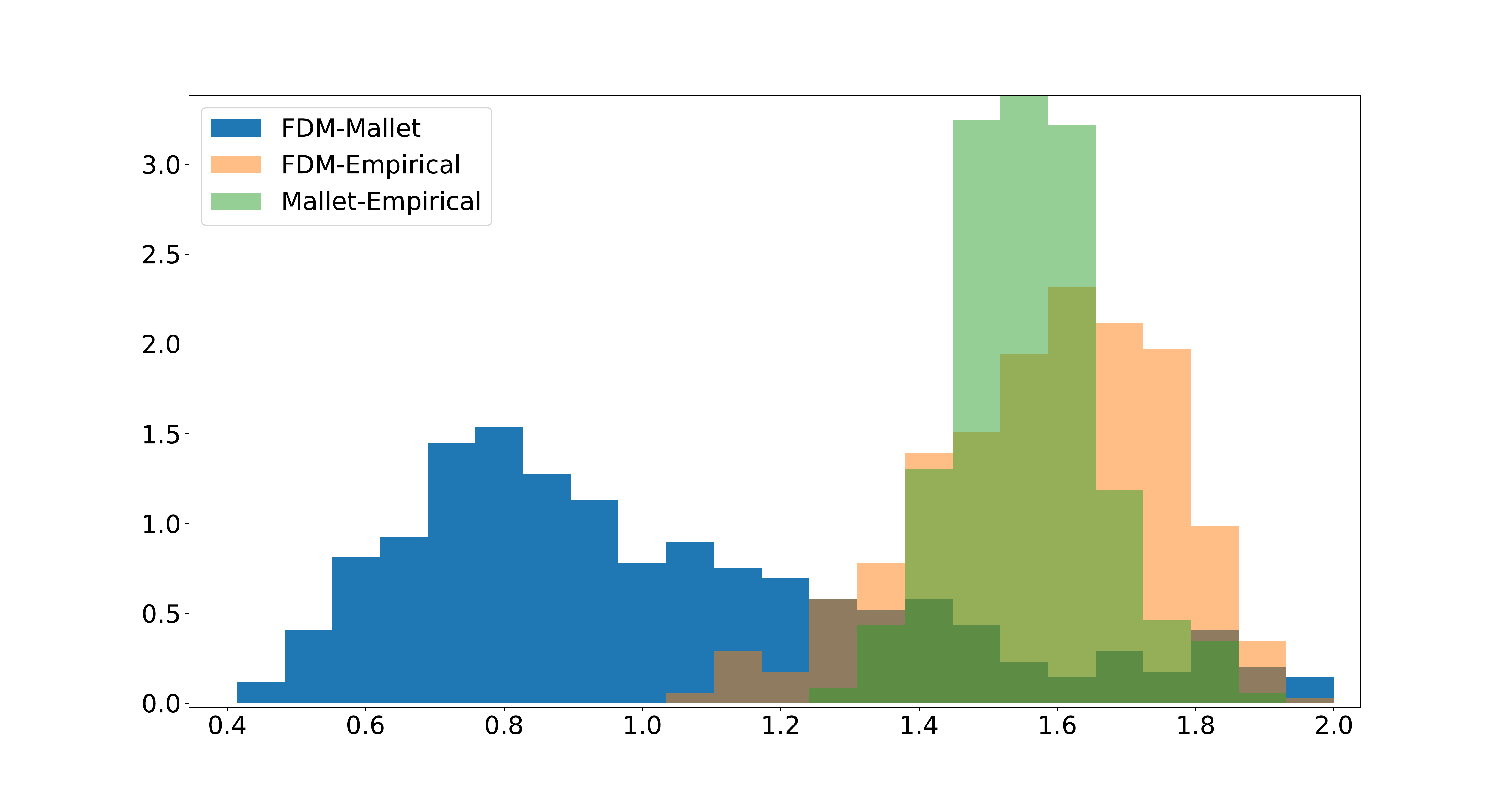}}%
\caption{NeurIPS Papers Experiment}
\end{figure*}

In the following sections we discuss experiments on  semi-synthetic data, the NeurIPS papers corpus and the Twitter dataset; see Section \ref{sec:subsec_exprimental} for an overview. 

\subsection{Synthetic Data}
\label{sec:expr_synthetic}

\begin{figure}
\centering
\includegraphics[width=\linewidth]{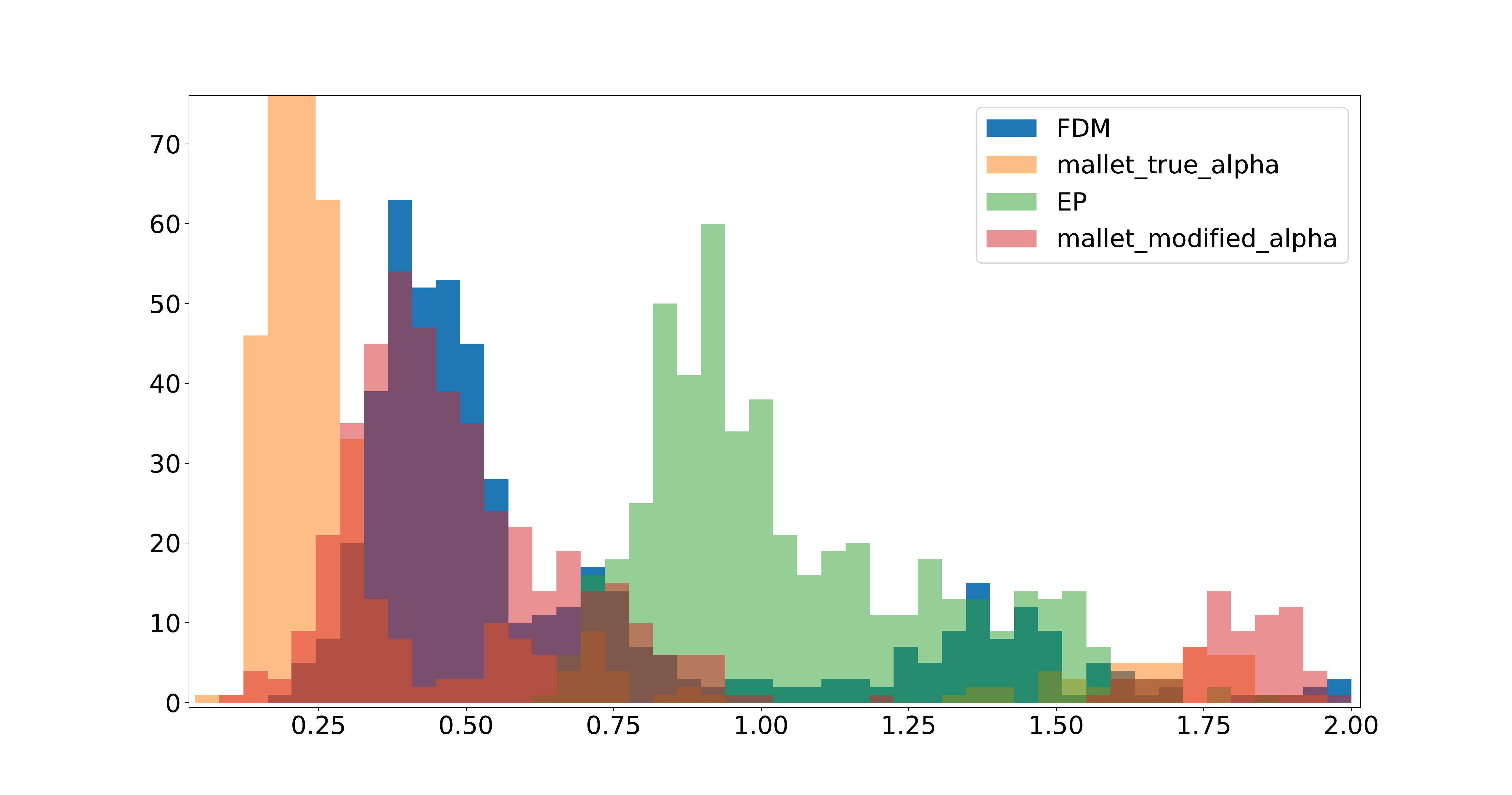}
\caption{Distribution of the distances to ground truths, in an optimal matching, for a typical instance of each method.}
\label{fig:synth_dist_hist}
\end{figure}

To approximate real data in a synthetic setting, we used $T=500$ topics learned by SparseLDA optimization on the NeurIPS papers corpus (Section \ref{sec:nips_experiment}) as the ground truth topics. The dictionary size in this case is $N=10500$ tokens. The synthetic documents were generated using the LDA model:
For each document, a topic distribution $\theta_d$ was sampled per document from a symmetric Dirichlet with the standard concentration parameter $\alpha = 1/T$, and $30$ tokens were sampled from the $\theta_d$ mixture of the ground truth topics. The corpus size was $D=100000$ documents. Note that for a dictionary of size $N=10500$ and non-uniform topics, this is not a  very large corpus.

To reconstruct the topics, we compared three algorithms: (i) SparseLDA -- a sparsity optimized parallel collapsed Gibbs sampler for LDA, implemented in the MALLET framework (see Section \ref{sec:literature} for details), which was run with 4 parallel threads. (ii) FDM, Algorithm \ref{alg:fdm_opt}.  (iii) The topic learning algorithm from \cite{Arora2}, to which we refer as EP (Extreme Points) in what follows.  SparseLDA was run with 4 threads. All algorithms were run 5 times, until convergence, on the fixed dataset. Note that the EP algorithm does not have a random initialization, but uses a random projection as an intermediate dimensionality reduction step. Therefore different runs might be affected by restarts (although very mildly, in practice). We used $M=1000$ as the dimension of the random projection -- a value that was specified in \cite{Arora2} as the practical value for the dictionary sizes of the order we use here. Hardware specifications are given in the supplementary material.
All the algorithms were run with the true number of topics $T$ as a parameter.

The SparseLDA algorithm was run in two modes: With the true hyperparameters, $\alpha = 1/T$, corresponding to the true $\alpha$ of the corpus, and with topic sparsity parameter $\beta = 1/N$, a standard setting which was also used to learn the ground truth topics. To model a situation where the true hyperparameters are unknown, we also evaluated SparseLDA with a modified hyperparameter $\alpha = 10/T$, and same $\beta$. Note that this is a relatively mild change of the hyperparameter.

The quality of the learned topics was measured by calculating the optimal matching $\ell_1$ \footnote{ $\Abs{\mu - \nu}_1 = \sum_{u\in \mathcal{X}} \Abs{\mu(u) -\nu(u)}$} distances to the ground truth topics. That is, given the topics returned by the model, $\nu_{t}$, $1\leq t \leq T$ and the ground truth topics $\mu_t$, we compute 
\begin{equation}
err = \frac{1}{T} \min_{\tau} \sum_{t=1}^T  \Abs{\mu_t - \nu_{\tau(t)}}_1,
\end{equation}
where $\tau$ is the matching --- a permutation of the set $\Set{1,\ldots,T}$. The optimal matching $\tau$ was computed using the Hungarian algorithm.

The results are given in Table \ref{tbl:synth_nips}, which shows for each algorithm the average error and the standard deviation over the different runs. To put the numbers 
in perspective, the typical $\ell_1$ distance between two ground truth topics is around $1.75$. Thus all algorithms learned at least some approximation of the ground truth. 
By visual inspection, a topic at distance $0.6$ from a given ground truth topic tends to capture the mass at the correct tokens, but the amount of mass deviates somewhat from the correct one.

We observe that SparseLDA with the ground truth $\alpha$ attained best performance. This is not surprising, since the algorithm is based on the generative model that is the true generative model of the corpus, and was provided with the true hyperparameters. Both of these constitute a considerable prior knowledge. 
The performance of the  EP algorithm was relatively low, which is likely due to the fact that the corpus size was not sufficiently large for that algorithm, with this set of topics.

Finally, FDM and SparseLDA with the modified hyperparameter attained similar performance. 
Interestingly, SparseLDA and FDM tend to err slightly differently. In Figure \ref{fig:synth_dist_hist} for a set of topics found by each algorithm we show the 
histogram of the quantities $\Abs{\mu_t - \nu_{\tau(t)}}_1$, i.e. the distribution of the distances within the matching. SparseLDA tends to completely miss about 50 to 80 out of 500 topics, while FDM is slightly less precise on the topics that it does approximate well. This figure is remarkably consistent across the different runs of the algorithms.

\begin{table}[h]
  \caption{Synthetic Corpus Matching Distances}
  \label{tbl:synth_nips}
  \centering
  \begin{tabular}{ll}
    \toprule
    Model  &  Average $\ell_1$ \\
    \midrule
    FDM            & $0.66 \pm 0.005$  \\
    SparseLDA, $\alpha = \alpha^*$ & $0.41 \pm 0.0004$  \\
    EP          & $1.05  \pm 0.005$ \\
    SparseLDA,  $\alpha = 10\alpha^*$  & $0.66 \pm 0.01$ \\
    \bottomrule
  \end{tabular}
\end{table}

\subsection{NeurIPS Dataset}
\label{sec:nips_experiment}
\begin{figure*}
\centering
\subcaptionbox{
\label{fig:twitter_log_like}
Distribution of log-likelihoods on test set. Larger (closer to zero) values are better. FDM (blue), SparseLDA (orange), EP (green).}{\includegraphics[width=0.48\textwidth]{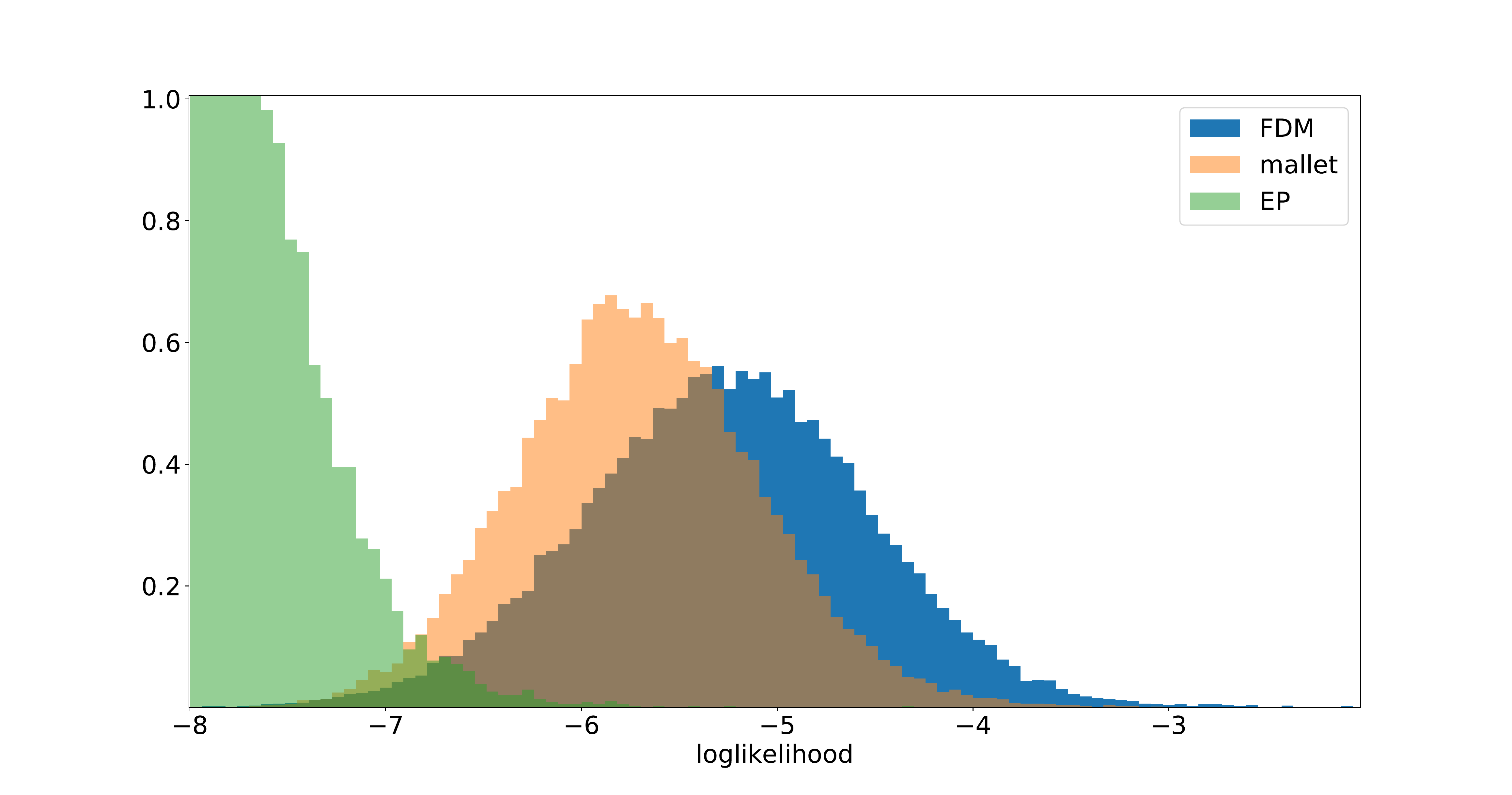}}%
\hfill 
\subcaptionbox{
\label{fig:twitter_topic_matching}
Distribution of distances in an optimal Matching. FDM to SparseLDA (blue), FDM to empirical (orange), SparseLDA to empirical(green).}{\includegraphics[width=0.48\textwidth]{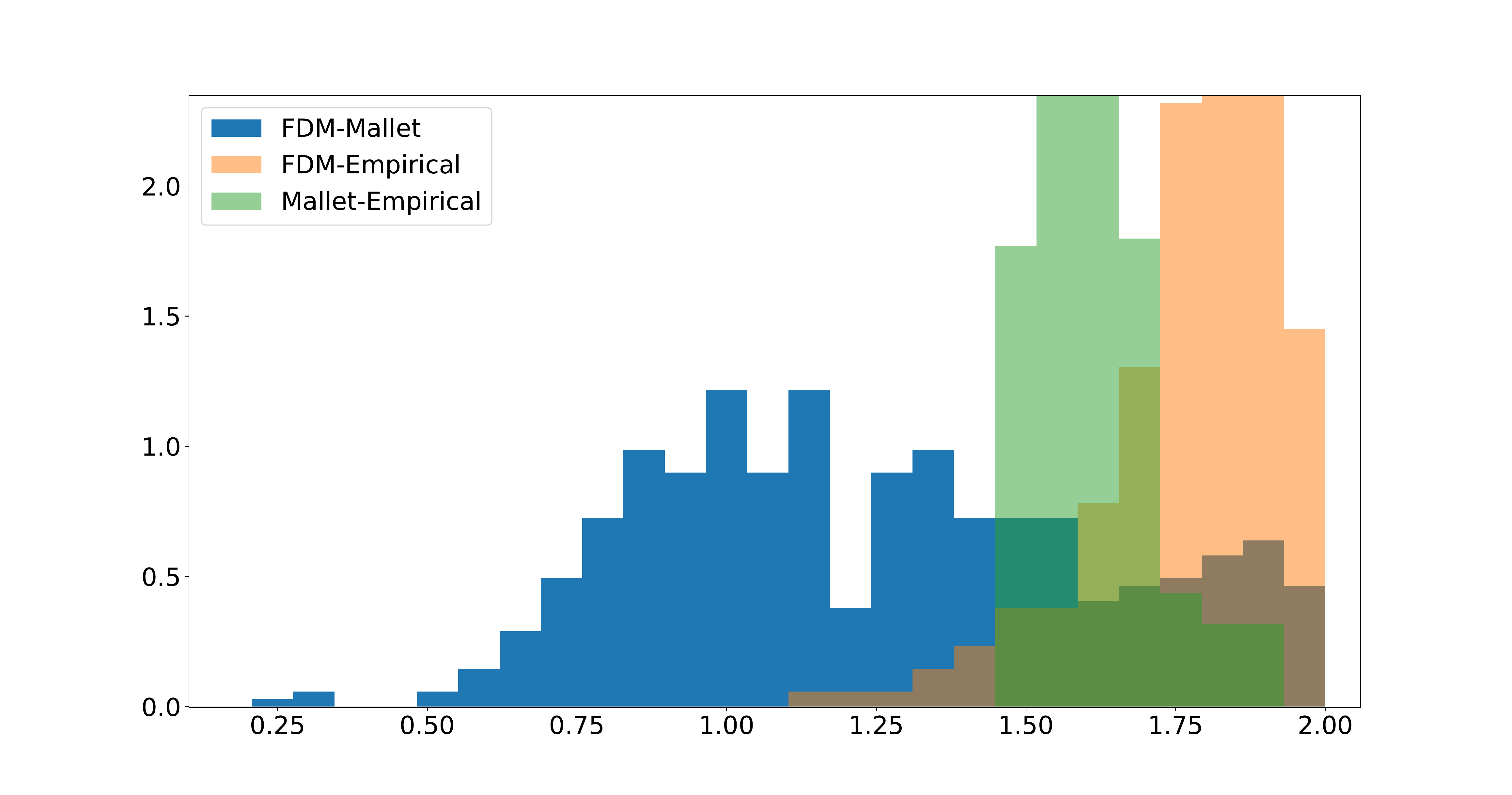}}%
\caption{Twitter Experiment}
\end{figure*}

The NeurIPS dataset, \cite{nips_corpus}, consists of 
all NeurIPS papers from 1987 to 2016. Each document was taken to be a single paragraph. Stop 
words, numbers and tokens appearing less than 50 times were removed. All tokens were 
stemmed and documents with less than 20 tokens are removed. The preprocessed dataset 
contained roughly $D = 251000$ documents over the dictionary of around $N = 10500$ unique tokens. $20\%$ of the 
documents were taken at random as a hold-out (test) set. 
The log-likelihood of the documents in the hold-out set was used as the performance measure. 
The computation of the (log) likelihood on the hold out set given topics is standard, with full details  provided in supplementary material Section \ref{sec:supp_likelihood_computation}.  The following models were trained: 
FDM, SparseLDA ($\alpha = 1/T, \beta = 1 / N$) and  EP (\cite{Arora2} ) with $T=500$ 
topics.  
All models were run until convergence.   

The mean hold-out log-likelihoods for each method are shown in Table  \ref{tbl:all_lglike}, and a histogram of the distribution of the holdout  log-likelihoods for a single run of each algorithm is shown in Figure \ref{fig:nips_log_like}.  We observe that the performance of SparseLDA and FDM are practically identical, and both perform better than EP.   

To obtain some insight into the relation between the models, Figure  \ref{fig:nips_topic_matching}  shows the histogram of the optimal matching distances between the topics learned by a fixed run of SpraseLDA and a fixed run of FDM (blue). For scale, the distances of SparseLDA and FDM to a fixed topic, the empirical distribution of the corpus, are also shown. It appears that both models find somewhat similar topics.  

\begin{table}[h]
  \caption{Test set average Log-likelihoods}
  \label{tbl:all_lglike}
  \centering
  \begin{tabular}{lll}
    \toprule
    Model  & NeurIPS & Twitter \\
    \midrule
    FDM & $-5.62$ & $-5.26$  \\
    SparseLDA & $-5.57$ & $-5.70$ \\
    EP  & $-6.31$ & $-8.5$ \\
    \bottomrule
  \end{tabular}
\end{table}

\subsection{Twitter Data}
\label{sec:expr-real-data-twitter}
We first describe the collection and processing of the Twitter 
corpus. The tweets were collected via the Tweeter API, 
\cite{twitter_api}.  The data contains about 16M (million) (after 
pre-processing, see below) publicly 
available tweets posted by 60K users during roughly the period 
of 1/2014 to 1/2017. The tweets were preprocessed to remove  
numbers, none-ASCII characters, mentions (Twitter usernames 
preceded by an @ character) and URLs were removed. All tokens 
were stemmed, and stop words and tokens with less than 3 
characters were removed. The most common 200 tokens and 
rare tokens (less than 1000 appearances in the corpus) were 
removed. Following this, tweets shorter than 4 tokens were also removed. This 
resulted in a corpus of about $D=16M$ tweets over a  dictionary of slightly more than $N=15000$ unique tokens. 
Each tweet was considered as a separate document, and typical tweets have 4 to 8 
tokens.  

The experiment setting was similar to the NeurIPS papers corpus. $20\%$ of the documents were taken as a hold-out set, and the holdout loglikelihood of SparseLDA, FDM and EP topics was evaluted. All algorithms were run to convergence, for about 12 hours for each run. 

The resulting hold-out log-likelihoods are given  in Table \ref{tbl:all_lglike}, and the distribution of the log-likelihoods in shown in Figure \ref{fig:twitter_log_like}.  We observe that in this case the performance of FDM is better than that of SparseLDA, while EP does not produce a good approximation of the dataset.

\section{Conclusions}
\label{sec:conclusion}
In this paper a introduced a new topic modeling approach, FDM topic modeling, which is based on 
matching, via KL divergence, of the token co-occurrence distribution induced by the topics to the 
co-occurrence distribution of the corpus. We have shown  the asymptotic consistency of the approach under the anchor words assumption and presented an efficient stochastic optimization 
procedure for this problem. This algorithm enables the approach to leverage GPU computation and efficient SGD optimizers. Our empirical evaluation shows that FDM produces topics of good quality and improves over the performance of SparseLDA.

\bibliographystyle{plain}
\bibliography{large_data_topics.bib}

\appendix 
\newpage 
\onecolumn
{\LARGE \centering{\textbf{Topic Modeling via Full Dependence Mixtures - Supplementary Material}}} 
\newline

\section{Outline}
In this supplementary material we provide the proofs of the results stated in the main text. In addition, details on the hardware used in the experiments are given in Section \ref{sec:hardware}.

\section{The Unbiased Estimate $\whM_d$}
\label{sec:supp_unbiased_estimates}
In this section we  discuss the construction of the matrix $\whM_d$, defined in (\ref{eq:whM_d_alg_definition}) in Algorithm \ref{alg:fdm_calc}. Specifically, we show the unbiased estimate property of 
$\whM_d$, (\ref{eq:unbiased_estimate_main_text}). 

First, let us introduce some notation. 
For any two vectors $a,b \in \RR^N$, denote by $a \otimes b \in \RR^{N\times N}$ the outer product of $a,b$, an $N\times N$ matrix given by:
\begin{equation}
\Brack{a \otimes b}_{u,v} = a(v) \cdot b(u). 
\end{equation}
For any probability distribution $\mu$ on $\mathcal{X}$, $(\mu \otimes \mu)_{u,v}$ is simply the 
probability of obtaining the pair $u,v$ when sampling independently twice from $\mu$.

For a document $d$, recall that $c_d$, defined in (\ref{eq:c_d_def}), is the token counts vector
of the document $d$ and $l_d$ is the total number of tokens in $d$. Set $\hat{d} = \frac{1}{l_d} c_d$ to be the empirical probability distribution on $\mathcal{X}$ corresponding to $d$. 

As described in the main text,  assuming the pLSA 
model, each document $d$ is an i.i.d sample from 
some mixture of topics 
\begin{equation}
\nu_d = \sum_t \theta_d(t) \mu_t.
\end{equation}

Let us fix some mixture
\begin{equation}
\nu = \sum_t \theta(t) \mu_t.
\end{equation}

The co-occurrence matrix for the mixture is by definition 
\begin{equation}
    \Brack{M_{\nu}}_{u,v} = \nu(u)\cdot \nu(v) = 
       \sum_{i,j\leq T} \theta(i)\theta(j) \mu_i(u) \mu_j(v). 
\end{equation}
Note that with our notation, we have 
\begin{equation}
    M_{\nu} = \nu \otimes \nu.
\end{equation}

Moreover, given a document $d$, note that the 
empirical co-ocurrence matrix $\whM_d$ may be written as 
\begin{equation}
\whM_d = \frac{l_d}{l_d -1 }\hat{d}\otimes \hat{d} - \frac{1}{l_d-1}Diag(\hat{d}).
\end{equation}
Here $Diag(x)$ is a diagonal $N \times N$ matrix with diagonal entries given by the vector 
$x\in \RR^N$.

We first compute the expectation of $\hat{d} \otimes \hat{d}$ in the following Lemma. 
\begin{lem}
\label{lem:moment_estimate}
   Let $d = \Set{x_1, \ldots, x_{l_d}}$ be an i.i.d sample from a distribution $\nu$. Then 
  \begin{equation}
     \Exp{ \hat{d} \otimes \hat{d}} = \frac{l_d-1}{l_d} \nu \otimes \nu + \frac{1}{l_d} Diag(\nu).
  \end{equation}
\end{lem}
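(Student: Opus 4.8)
## Proof Plan for Lemma \ref{lem:moment_estimate}

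The plan is to compute the expectation entrywise and split into the diagonal and off-diagonal cases, exploiting the i.i.d. structure of the sample. Recall that $\hat{d} = \frac{1}{l_d} c_d$ where $c_d(u) = \#\Set{ i \setsep x_i = u}$, so that
\begin{equation}
\Brack{\hat{d} \otimes \hat{d}}_{u,v} = \frac{1}{l_d^2} c_d(u) c_d(v) = \frac{1}{l_d^2} \sum_{i,j=1}^{l_d} \Ind{x_i = v} \Ind{x_j = u}.
\end{equation}
Taking expectations and separating the sum over pairs $(i,j)$ into the $i = j$ terms ($l_d$ of them) and the $i \neq j$ terms ($l_d(l_d-1)$ of them) is the natural first move.

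For the $i = j$ terms, $\Ind{x_i = v}\Ind{x_i = u} = \Ind{x_i = u = v}$, which contributes only when $u = v$; each such term has expectation $\nu(u)$, so the diagonal contribution is $\frac{1}{l_d^2} \cdot l_d \cdot \nu(u) = \frac{1}{l_d}\nu(u)$, i.e. $\frac{1}{l_d} Diag(\nu)$ as a matrix. For the $i \neq j$ terms, independence of $x_i$ and $x_j$ gives $\Exp{\Ind{x_i = v}\Ind{x_j = u}} = \nu(v)\nu(u)$, and summing over the $l_d(l_d-1)$ ordered pairs yields $\frac{l_d(l_d-1)}{l_d^2} \nu(u)\nu(v) = \frac{l_d-1}{l_d} \Brack{\nu \otimes \nu}_{u,v}$. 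Adding the two contributions gives exactly $\frac{l_d-1}{l_d} \nu \otimes \nu + \frac{1}{l_d} Diag(\nu)$, as claimed. One small point of care is to use the paper's convention $\Brack{a \otimes b}_{u,v} = a(v) b(u)$ consistently so that the indices on the indicator variables line up, but since here $a = b = \nu$ this is symmetric and harmless.

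I do not expect any genuine obstacle: this is a routine second-moment computation for an empirical distribution of an i.i.d. sample, and the only thing to be careful about is bookkeeping the count of diagonal versus off-diagonal index pairs and the factors of $l_d$. The one substantive modeling remark is that the lemma is stated conditionally on $l_d$ (the document length is treated as fixed), which is why no expectation over $l_d$ appears; combined with a further conditioning on $\theta_d$ this lemma immediately yields the unbiasedness statement $\Expsubidx{d}{\whM_d \mid \theta} = M_\nu$ via the matrix identity $\whM_d = \frac{l_d}{l_d-1}\hat d \otimes \hat d - \frac{1}{l_d-1} Diag(\hat d)$ together with $\Exp{Diag(\hat d)} = Diag(\nu)$.
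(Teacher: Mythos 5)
Your proof is correct and follows essentially the same route as the paper's: expand $\Brack{\hat{d}\otimes\hat{d}}_{u,v}$ as a double sum of indicator products, split into the $i=j$ and $i\neq j$ cases, and count $l_d$ versus $l_d(l_d-1)$ terms. Your write-up is in fact slightly more complete, since the paper only works out the off-diagonal case explicitly and leaves the diagonal case as ``handled similarly.''
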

\begin{proof}
    Consider the coordinate $u,v$ of the matrix $\Exp{ \hat{d} \otimes \hat{d}}$. 
    \begin{eqnarray*}
        \Brack{\Exp{ \hat{d} \otimes \hat{d}}}_{u,v} = \Exp{ \hat{d}(u) \cdot \hat{d}(v)} = \\
        \frac{1}{{}l_d^2}\Exp{ \Brack{\sum_{i=1}^{l_d} \Ind{x_i = u} } \cdot 
              \Brack{\sum_{j=1}^{l_d} \Ind{x_j = v} }
            }  = \\
       \frac{1}{{l_d}^2}\sum_{i,j=1}^{l_d} 
        \Exp{\Brack{ \Ind{x_i = u}\cdot \Ind{x_j = v}} }.  
    \end{eqnarray*}
The  results is now obtained by by considering separately the cases $i=j$, $i\neq j$, $u=v$, $u\neq v$.  Indeed, choose for instance fixed $u\neq v$.  For $i\neq j$ we have
\begin{equation}
    \Exp{\Brack{ \Ind{x_i = u}\cdot \Ind{x_j = v}} } = \nu(u) \cdot \nu(v). 
\end{equation}
For $i = j$, since $u \neq v$,
\begin{equation}
    \Exp{\Brack{ \Ind{x_i = u}\cdot \Ind{x_i = v}} } = 0.
\end{equation}
Since there are $l_d^2 - l_d$ pairs $i,j$ with $i \neq j$, we thus have overall that 
\begin{equation}
   \Brack{\Exp{ \hat{d} \otimes \hat{d}}}_{u,v} =  \frac{1}{{l_d}^2}\sum_{i,j=1}^{l_d} 
        \Exp{\Brack{ \Ind{x_i = u}\cdot \Ind{x_j = v}} } = \frac{l_d(l_d -1)}{l_d^2} \nu(u) \cdot \nu(v) 
        = \frac{l_d -1}{l_d} \nu(u) \cdot \nu(v)  = \Brack{\frac{l_d -1}{l_d} \nu \otimes \nu}_{u,v}.
\end{equation}
The diagonal case, $u=v$, is handled similarly.
\end{proof}

It follows therefore that if $d$ is a document constructed by sampling $l_d$ tokens from $\nu$, then 
$\hat{d}\otimes \hat{d}$ is not an unbiased estimate of $\nu \otimes \nu$. One can however easily fix this 
by subtracting the diagonal and renormalizing. Indeed, from Lemma \ref{lem:moment_estimate} we have
\begin{eqnarray}
\label{eq:unbiased_moment_estimate}
      \nu \otimes \nu &=& \frac{l_d}{l_d-1} \Exp{\hat{d} \otimes \hat{d} }  -     
      \frac{1}{l_d-1} Diag(\nu)   \\
      &=& \frac{l_d}{l_d-1} \Exp{\hat{d} \otimes \hat{d} }  -  
      \frac{1}{l_d-1} \Exp{Diag(\hat{d})}  \nonumber \\
      &=& \Exp{ \Brack{\frac{l_d}{l_d-1} \hat{d} \otimes \hat{d}   -  \frac{1}{l_d-1} Diag(\hat{d})}}  \nonumber \\
      &=& \Exp \whM_d.  \nonumber
\end{eqnarray}

That is, $\whM_d$ is an unbiased estimate of $\nu_d \otimes \nu_d$.

\section{Proof of Proposition \ref{lem:maximal_polytopes}}
\label{sec:supp_polytopes_proof}
\begin{proof}
$(sM) \implies (P)$: Suppose $\mu = \sum_{t} a_t \mu_t \in \nsimplx$. 
Since also have $\mu \in span \Set{\mu_t}_{1}^T$,  by span-maximality 
it follows that there are $b_t \geq 0$, with $\sum_t b_t = 1$ such that $\mu = \sum b_t \phi_t$. Now, by linear independence, there is a unique representation of $\mu$. Thus $a_t = b_t$. Conversely, to show $(P) \implies (sM)$, suppose $\mu  = \sum_t a_t \mu_t \in \nsimplx \cap span \Set{\mu_t}_{1}^T$. By (P) we have $a_t \geq 0$. Note also that $1 = \sum_u \mu(u) = \sum_t \sum_u a_t \mu_t(u) = \sum_t a_t$. Thus $\mu \in conv \Set{\mu_t}_{1}^T$. 

We now show $(P) \iff (\mathcal{AW})$. First assume $(\mathcal{AW})$. Let $u_t$ be the anchor words. Choose some $\mu = \sum_{t} a_t \mu_t \in \nsimplx$. Then by definition of the anchor words, $\mu_t(u_t) = \sum_{t} a_t \mu_t(u_t) = a_t \mu_t(u_t)$. Since $\mu_t(u_t) >0$, and $\mu \in \nsimplx$ this implies $a_t \geq 0$. For the converse, we show $\neg (\mathcal{AW}) \implies \neg (P)$. Assume $\neg (\mathcal{AW})$. This means that there is a topic $\mu_t$ such that for every token $u$ for which $\mu_t(u)>0$, there is another topic, $\mu_{t_u}$, such that $\mu_{t_u}(u) > 0$. In particular, set $\tilde{\mu} = \frac{1}{T-1} \sum_{t' \neq t} \mu_{t'}$. It follows then that $\mu_t$ is absolutely continuous with respect to $\mu$. That is, for every $u$ for which $\mu_t(u) >0$, we have $\tilde{\mu}(u) > 0$. Using this property, it is clear that for a small enough $\eps >0$, we have $\tilde{\mu}(u) - \eps \mu_t(u) \geq 0$ for all $u$. Thus $\frac{1}{1-\eps} \Brack{\tilde{\mu}(u) - \eps \mu_t(u)} \in \nsimplx$. However, this expression has a strictly negative coefficient at $\mu_t$, thus contradicting $(P)$. 

Finally, we show $(sM) \implies (M)$ and  $(M) \implies (AW)$.

$(sM) \implies (M)$: \\
Assume $(sM)$. We show $(M')$. Since $conv \Set{\mu_t}_{1}^T \subset conv \Set{\nu_t}_{1}^T$ and since $\mu_t$ are linearly independent, we have  $span \Set{\mu_t}_{1}^T = span \Set{\nu_t}_{1}^T$. By $(sM)$, this implies $conv \Set{\mu_t}_{1}^T = conv \Set{\nu_t}_{1}^T$. Since any polytope defines its extreme points uniquely, we obtain the conclusion of $(M')$.

For $(M) \implies (AW)$, we equivalently show $\neg (\mathcal{AW}) \implies \neg (M')$. Assume $\neg (\mathcal{AW})$. Consider the 
distribution $\widetilde{\mu_t} = \frac{1}{1-\eps} \Brack{\tilde{\mu}(u) - \eps \mu_t(u)}$ constructed earlier. 
Clearly we may write $\mu_t = \alpha \widetilde{\mu_t} + \beta \tilde{\mu} $  with $\alpha+\beta =1$, and  $\alpha,\beta >0$, strictly positive. It follows that $conv \Set{\mu_t}_{1}^T \subsetneq conv \Brack{\Set{\widetilde{\mu_t}} \cup \Set{\mu_{t'}}_{t' \neq t}}$, thus implying $(M')$.
\end{proof}

\section{Proof of Theorem \ref{thm:consistency}}
\label{sec:supp_thm_proof}
In this section we use the notation introduced in Section \ref{sec:supp_unbiased_estimates}.

\begin{proof}
It follows from Lemma \ref{lem:moment_estimate} and (\ref{eq:unbiased_moment_estimate}) that conditioned on $\theta_d$, we have 
$\Exp{ \widehat{M_d} |\; \theta_d}= \sum_{i,j=1}^T \theta_d(i) \theta_d(j) \mu_i \otimes \mu_j$. 
Therefore by definition, 
\begin{equation}
\Exp{ \widehat{M_d}}  = \Exp {\sum_{i,j=1}^T \theta_d(i) \theta_d(j) \mu_i \otimes \mu_j} =  \sum_{i,j=1}^T \Theta_{i,j} \mu_i(u) \mu_j(v) = M. 
\end{equation}
Since $M_d$ are bounded independent variables in $\RR^{N\times N}$, and 
$\widehat{M} = \frac{1}{D} \Brack{\sum_{i=1}^D
 \widehat{M}_{d_i}   }$, 
by the Law of Large Numbers we have $\widehat{M} \rightarrow M$ with probability $1$, which in turn implies (\ref{eq:thm_direct_statement}).

Conversely, let 
\begin{equation}
\label{eq:Mprime_form_thm}
M' = \sum_{i,j} \beta_{i,j} \mu'_i \otimes \mu'_j 
\end{equation}
be an FDM based on the topics $\Set{\mu'_i}_{1}^T$. 
Set $V = span \Set{\mu_i}_{1}^T$ and $V' = span \Set{\mu'_i}_{1}^T$. Clearly $Im(M') \subseteq V'$. 
On the other hand, since $\Theta$ has full rank, 
$\kappa(\Theta)>0$, we have 
$Im(M) = V$ and  $\kappa(M) > 0$. 
Next, by Lemma \ref{lem:maximal_polytopes}, topics with ($\mathcal{AW}$) property are identified by their span.  
Since $\Set{\mu'_i}_{1}^T \neq \Set{\mu_i}_{1}^T$, this implies $V \neq V'$. 
Set 
\begin{equation}
\label{eq:gamma_def_thm}
 \gamma' := \inf_{M' \; s.t. \; Im(M') \subseteq V'} \norm{M - M'}_{op}  > 0,
\end{equation}
where the norm is (say) the operator norm. The fact that $\gamma > 0$ is crucial and follows, for instance, from the 
Davis-Kahan  ``sin'' Perturbation Theorem, 
\cite{bhatia97matrix}(Section VII.3). Indeed, assume to the contrary that $\gamma = 0$. This would imply that there is a sequence $M'_n$, with $Im(M'_n) \subseteq V'$,  converging to $M$. By Davis-Kahan Theorem, the eigen-spaces of $M'_n$ must converge to those of $M$. However, since $\kappa(M) >0$ and since $V \neq V'$  (and hence $\sin(V,V') > 0$, see \cite{bhatia97matrix}), this is impossible. 

It remains to observe that since all finite dimensional norms are equivalent, $\gamma > 0$ implies $\norm{M - M'}_1 \geq \gamma'' > 0$ where 
$\norm{\cdot}_1$ is the coordinatewise $\ell_1$ norm. Next, Pinsker's Inequality, \cite{coverthomas},  yields a lower bound on the KL-divergence,
\begin{equation}
  0 < \gamma'' \leq \norm{M - M'}_1 \leq \sqrt{2 D_{KL}(M | M')}, 
\end{equation}
where 
\begin{align}
    D_{KL}(M | M') &= \sum_{u,v} M_{u,v} \log \frac{M_{u,v}}{M'_{u,v}} \\ 
    &= \sum_{u,v} M_{u,v} \log M_{u,v} - 
    \sum_{u,v} M_{u,v} \log M'_{u,v}. 
    \label{eq:kl_expanded}
\end{align}
To summarize, 
\begin{equation}
\label{eq:gamma_def_thm_kl}
 \gamma''' := \inf_{M' \; s.t. \; Im(M') \subseteq V'} D_{KL}(M'|M)  > 0. 
\end{equation}
Finally, we obtain
\begin{align}
    \lim_{D\rightarrow \infty} 
    \sum_{u,v} \widehat{M}_{u,v} \log(M'_{u,v})  
    &= \sum_{u,v} M_{u,v} \log(M'_{u,v})  \label{eq:cons_thm_final_line1}\\
    &\leq \sum_{u,v} M_{u,v} \log(M_{u,v}) - \gamma''', 
    \label{eq:cons_thm_final_line2}
\end{align}
where (\ref{eq:cons_thm_final_line1}) follows from the first part of the proof and (\ref{eq:cons_thm_final_line2}) follows form 
(\ref{eq:gamma_def_thm_kl}) and (\ref{eq:kl_expanded}). 
\end{proof}

\section{Holdout Likelihood Computation}
\label{sec:supp_likelihood_computation}

For a document $d$ given as a sequence 
of tokens $d = \Set{x_1,\ldots,x_{l_d}}$, where $l_d$ is the total number of tokens in $d$, 
recall from the main text that we denote by  
$c_d \in \RR^N$ the count vector of $d$, \begin{equation}
\label{eq:c_d_def}
c_d(u) = \#\Set{ x_i \setsep x_i = u } \mbox{ for $u \in \mathcal{X}$}. 
\end{equation}
The empirical distribution of $d$, denoted 
by $\hat{d}$ is the normalized count vector, 
$\hat{d} = \frac{1}{l_d} \cdot c_d \in \RR^N$.

Given the topics returned by the model, $\Set{\mu_i \mid i=1,..,T}$ for every  document 
$d$ we first compute the topics assignment $\theta_d$ as follows:
\begin{eqnarray}
    \label{eq:optimal_kl_assignment}
    \theta_d &=& \argmin_{\theta} D_{KL}(\hat{d}, m(\theta)) \\ 
    &=&  \argmax_{\theta}  \sum_{u\in \mathcal{X}} \hat{d}(u) \log \BBrack{m(\theta)(u)} \\
    &=& \argmax_{\theta}  \frac{1}{l_d}\sum_{x_i \in d} \log \BBrack{m(\theta)(x_i)}
    ,
\end{eqnarray}

where $m(\theta)$ is the mixture generated by the topics and the assignment $\theta$, 
$m(\theta) = \sum_{t} \theta(t) \mu_t$, 
and $D_{KL}$ is the Kullback-Leibler 
divergence. Thus $\theta_d$ is the assignment such that the mixture $m(\theta_d)$ best 
approximates the document in KL divergence. Equivalently, $\theta_d$ is the assignment such that the 
mixture $m(\theta_d)$ gives the highest likelihood to the document $d$. This is the standard definition 
of likelihood for pLSA models.

Note that (\ref{eq:optimal_kl_assignment}) 
is a convex problem in $\theta$, and can be solved efficiently and in parallel over the 
documents. Solving (\ref{eq:optimal_kl_assignment}) is a standard step in most 
in pLSA and Non-Negative Matrix Factorization methods, and existing efficient implementations
may be used. See for instance \cite{scikit-learn}.

Next, given $\theta_d$ we compute the document likelihood $L_d$ as $L_d = \sum_{u \in \mathcal{X}} \hat{d}(u) \log m(\theta_d)(u)$ and take the overall likelihood of the holdout set of documents to be the average of $L_d$ over all 
documents in the set.

\section{Hardware Specifications}
\label{sec:hardware}

 The LDA models were trained using an Intel Core i7-6950X processor, using the collapsed Gibbs sampler 
algorithm implemented in the MALLET package, \cite{McCallumMALLET}.
The FDM models were trained using NVIDIA GeForce GTX 1080 Ti GPU, and the optimization was implemented using TensorFlow 1.9.0 and Adam SGD optimizer with $0.001$ learning rate.

\end{document}